\documentclass[journal]{IEEEtran}
\ifCLASSINFOpdf
\else
\fi
\IEEEoverridecommandlockouts
\usepackage{setspace}
\usepackage{cite}
\usepackage{amsmath,amssymb,amsfonts,amsthm}

\newtheorem{lemma}{Lemma}
\newtheorem{proposition}{Proposition}
\usepackage{algorithmic}
\usepackage{algorithm}
\usepackage{graphicx}
\usepackage{comment}
\usepackage{textcomp}
\usepackage{color}
\usepackage{stfloats}
\usepackage{tabularx}
\usepackage{booktabs}
\usepackage[font=footnotesize]{caption}
\usepackage{subcaption}
\allowdisplaybreaks[2]

\usepackage{hyperref}

\begin{document}

\title{
Semantic Feature Multiple Access Empowered Integrated Learning and Communication Networks
}

\author{Jiaxiang Wang, Zhouxiang Zhao, Yahao Ding, Zhijin Qin, \IEEEmembership{Senior Member, IEEE}, Zhaohui Yang, 
\\ Mingzhe Chen, \IEEEmembership{Senior Member, IEEE}, 
and Mohammad Shikh-Bahaei,
\IEEEmembership{Senior Member, IEEE}
\thanks{Jiaxiang Wang, Yahao Ding and Mohammad Shikh-Bahaei are with the Centre for Telecommunications Research, Department of Engineering, King’s College London, WC2R 2LS London, U.K. (emails: jiaxiang.wang@kcl.ac.uk, yahao.ding@kcl.ac.uk, m.sbahaei@kcl.ac.uk).}
\thanks{Zhouxiang Zhao and Zhaohui Yang are with the College of Information Science and Electronic Engineering, Zhejiang University, Hangzhou 310027, China, and also with Zhejiang Provincial Key Lab of Information Processing, Communication and Networking (IPCAN), Hangzhou 310007, China. (emails: zhouxiangzhao@zju.edu.cn, yang\_zhaohui@zju.edu.cn). }
\thanks{Zhijin Qin is with the Department of Electronic Engineering, Tsinghua University, Beijing 100084, China. (e-mail: qinzhijin@tsinghua.edu.cn).}
\thanks{Mingzhe Chen is with the Department of Electrical and Computer Engineering and Institute for Data Science and Computing, University of Miami, Coral Gables, FL, 33146, USA. (e-mail: mingzhe.chen@miami.edu).
}
}

\maketitle
\begin{abstract}
Integrated learning and communication (ILAC) unifies learned transceivers with radio resource management, where semantic feature multiple access (SFMA) enables paired users to superpose their learned representations over shared time-frequency resources. Unlike conventional multiple access schemes, SFMA interference arises in the learned feature space and depends jointly on the user pair, the transmit power, and the compression ratio. This coupling ties binary pairing decisions to continuous resource variables, yielding a mixed-integer non-convex optimisation problem. To address this problem, we first propose similarity-conditioned SFMA (SC-SFMA), a Swin Transformer-based transceiver whose dual-conditioned similarity modulator (DC-SimM) gates cross-user feature fusion according to the inter-user semantic similarity. We then characterise the resulting pair-dependent interference by a bivariate logistic function parameterised by transmit power and compression ratio, thereby bridging the learned transceiver with network-level optimisation. On this basis, we formulate a sum-rate maximisation problem subject to per-user distortion, latency, energy, power, and bandwidth constraints. To solve this problem, we develop a three-block alternating optimisation algorithm that integrates dual-decomposition-assisted compression ratio allocation, trust-region successive convex approximation (SCA) for joint power–bandwidth optimisation, and dynamic feasible graph-based user pairing. Simulation results show that SC-SFMA achieves considerable peak signal-to-noise ratio (PSNR) and multi-scale structural similarity index measure (MS-SSIM) gains over deep joint source–channel coding (JSCC) and separation-based baselines. The proposed optimisation framework attains significant sum rate improvements than conventional multiple access baselines.

\end{abstract}
\begin{IEEEkeywords}
    Multi-user semantic communication, semantic feature multiple access (SFMA), integrated learning and communication (ILAC), AI-native air interface.
\end{IEEEkeywords}

\section{Introduction}
 
Sixth-generation (6G) wireless networks are envisioned to natively embed artificial intelligence (AI) into the network fabric, evolving from data pipes into context-aware, task-driven infrastructures~\cite{saad2019vision,letaief2022edge,letaief2019roadmap}. This convergence has been formalised as the paradigm of {integrated learning and communication} (ILAC), which co-designs learned transceivers with radio resource management so that AI enhances transmission and, reciprocally, wireless networks support efficient model deployment~\cite{xu2023edge,xu2025ilac,zhang2025comai}. A prominent ILAC instantiation is {semantic communication}, which extracts and transmits only task-relevant features rather than raw bits, yielding substantial bandwidth savings for high-dimensional data~\cite{yang2022semantic,shi2023task}. As an ILAC-native scheme, semantic communication inherently couples the learned encoder--decoder pair with radio resource decisions such as power, bandwidth, and compression ratio~\cite{xu2025ilac,zhao2026agentic}. This coupling calls for a unified framework that simultaneously governs the learning and communication aspects of the network.
 
While deep joint source--channel coding (JSCC) has established the single-link foundation of semantic communication~\cite{bourtsoulatze2019deep,dai2022nonlinear,yang2024swinjscc}, scaling it to a multi-user network introduces a fundamentally new challenge. When multiple users share time--frequency resources, the superposition of their learned representations produces {semantic interference} whose severity depends not on power disparity, as in conventional non-orthogonal multiple access (NOMA), but on the semantic compatibility between paired users~\cite{wang2025semantic,wang2025generative}. This observation has motivated {semantic feature multiple access} (SFMA), which allows paired users to superpose their semantic features over shared resources~\cite{wang2025semantic}. SFMA introduces a three-way coupling absent in classical multiple access: the binary pairing decision, the continuous resource variables, and the pair-dependent semantic interference are mutually dependent. However, existing SFMA designs assume fixed user pairing without jointly optimising pairing decisions and resource allocation, leaving a unified ILAC framework for SFMA an open problem.
 
\subsection{Related Works}
 
A rich body of work has laid the groundwork for ILAC by investigating the interplay between learned transceivers and wireless resource management. Edge AI frameworks~\cite{letaief2022edge,letaief2019roadmap} advocate the co-design of communication and learning, edge learning with distributed signal processing~\cite{xu2023edge} establishes dual-functional performance metrics, and the recent ILAC formalisation~\cite{xu2025ilac} further unifies large AI models with adaptive communication. The convergence of communication and AI has also been systematised from the network architecture perspective~\cite{zhang2025comai}. These architectural visions have been concretised at the physical layer through deep JSCC~\cite{bourtsoulatze2019deep,dai2022nonlinear,yang2024swinjscc,erdemir2023generative} and its multi-user extensions, yet the question of how to jointly manage semantic transceivers and radio resources across users within an ILAC framework remains largely open.
 
Several recent works have begun to address multi-user semantic transmission. Hierarchical broadcasting~\cite{bo2024deep} delivers scalable quality to heterogeneous users, task-oriented multi-user communication~\cite{xie2022task} jointly optimises encoding for downstream inference, and NOMA-enhanced semantic transmission~\cite{li2023non} applies successive interference cancellation in the power domain. More recently, semantic-domain multiple access paradigms have emerged: DeepMA~\cite{zhang2024deepma} trains mutually orthogonal encoder--decoder pairs for channel multiplexing, semantic feature division multiple access (SFDMA)~\cite{ma2024semantic} encodes multi-user information into discrete approximately orthogonal subspaces, and SFMA~\cite{wang2025semantic,wang2025generative} superposes continuous semantic features of paired users for joint transmission. In the broader ILAC context, semantic-aware collaboration in wireless agent networks~\cite{zhao2026agentic} coordinates multi-agent sensing and communication. However, all these schemes either assume fixed user groupings or manage interference purely in the physical layer without jointly optimising the semantic pairing and radio resource allocation.
 
On the resource allocation side, semantic-aware optimisation has been studied from perspectives including cooperative multi-server management~\cite{zhang2023optimization}, rate splitting multiple access (RSMA)-based joint compression ratio and radio resource optimisation~\cite{zhao2025compression}, adaptable semantic compression~\cite{liu2023adaptable}, intelligent reflecting surface (IRS)-enhanced cross-layer allocation~\cite{wang2024irs}, and learning-based approaches~\cite{wang2022performance,zhang2023predictive}. Meanwhile, user pairing for conventional NOMA has been extensively studied based on channel gain disparities~\cite{ding2016impact}, dynamic clustering~\cite{ali2017dynamic}, and joint power allocation with decoding order~\cite{yang2023joint,zhu2018optimal}. A common limitation is that semantic resource allocation schemes assume orthogonal access and therefore do not encounter the pairing--resource coupling inherent in SFMA, while NOMA pairing criteria target physical-layer objectives and pair users by channel disparity rather than semantic similarity. In summary, a unified ILAC framework that jointly optimises user pairing and radio resource allocation by exploiting semantic-domain interference has not been developed.
 
\subsection{Contributions}
 
The main contributions of this paper are summarised as follows.
\begin{itemize}
 
\item We propose SC-SFMA, a Swin Transformer-based transceiver whose dual-conditioned similarity modulator (DC-SimM) adaptively gates cross-user feature fusion according to inter-user semantic similarity, enabling graceful interpolation between independent coding and full feature fusion within the ILAC framework.
 
\item We establish a pair-conditioned analytical framework that bridges the learned transceiver with radio resource management. The semantic interference is captured by a bivariate logistic function jointly parameterised by transmit power and compression ratio, and a sum-rate maximisation problem is formulated subject to per-user distortion, latency, energy, power, and bandwidth constraints.
 
\item We develop a three-block alternating optimisation algorithm comprising dual-decomposition-assisted compression ratio allocation, trust-region successive convex approximation (SCA) for joint power--bandwidth optimisation, and dynamic-feasible-graph-based user pairing, which produces a monotonically non-decreasing objective sequence with polynomial per-iteration complexity.
 
\item Simulation results demonstrate that SC-SFMA achieves significant peak signal-to-noise ratio (PSNR) and multi-scale structural similarity index measure (MS-SSIM) gains over deep JSCC and separation-based baselines, while the joint optimisation framework attains up to $39\%$ higher sum rate than NOMA-based transmission and $32\%$ higher than channel-gain-based pairing.
 
\end{itemize}
 
The remainder of this paper is organised as follows. Section~\ref{sec:system_model} presents the system model and problem formulation. Section~\ref{sec:solution} develops the proposed algorithm. Section~\ref{sec:results} provides simulation results. Section~\ref{sec:conclusion} concludes the paper.

\section{System Model}
\label{sec:system_model}

\begin{figure*}
    \centering
    \includegraphics[width=\linewidth]{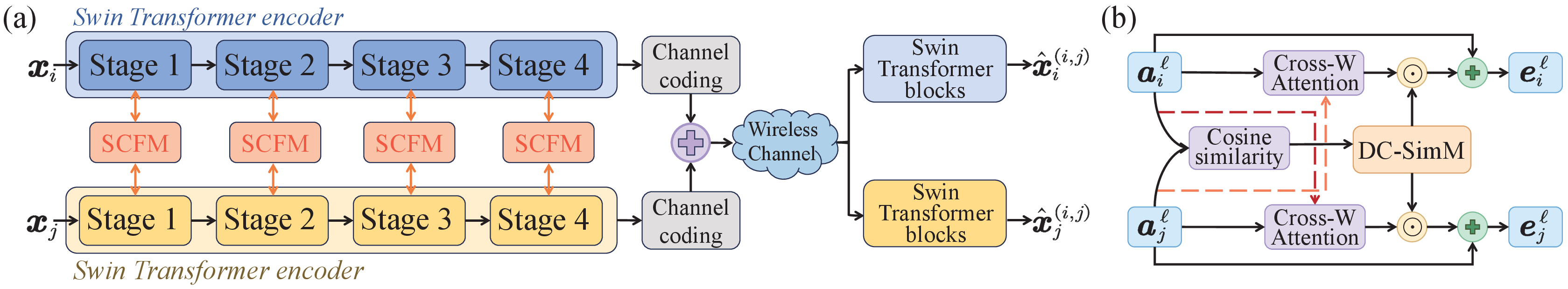}
    \caption{System architecture of the proposed SC-SFMA framework. (a) Overall transceiver: two users' images are encoded by four-stage Swin Transformer encoders with SCFM modules exchanging cross-user semantic features at each stage; the encoded features are superposed and transmitted over a shared wireless channel, and each user independently decodes the received signal via Swin Transformer blocks. (b) Detail of the SCFM module: the dual-conditioned similarity modulator computes per-channel modulation weights from the inter-user cosine similarity, which gate the cross-window attention output before residual addition.}
    \label{fig:system_model}
\end{figure*}

We consider a downlink semantic communication system where one base station (BS) serves an even number of users indexed by $\mathcal{N}=\{1,\dots,N\}$. The users are partitioned into $K=N/2$ two-user groups indexed by $\mathcal{K}=\{1,\dots,K\}$. Let 
\begin{equation}
\label{eq:edge set}
    \mathcal{P}=\{(i,j)\mid 1\le i<j\le N\}
\end{equation}
denote the set of all candidate user pairs. For each candidate pair $(i,j)\in\mathcal{P}$ and group $k\in\mathcal{K}$, we define the binary assignment variable
\begin{equation}
\label{eq:pair_indicator}
w_{ij}^k=
\begin{cases}
1, & \text{if pair $(i,j)$ is assigned to group $k$,}\\
0, & \text{otherwise.}
\end{cases}
\end{equation}
Unless otherwise stated, $w_{ij}^k$ is defined only for $i<j$ in order to avoid duplicated pair indices. Each active group $k$ is allocated one bandwidth $b_k$, transmit power $p_k$, and semantic compression ratio $\delta_k$. The two users within the same group share the same compression ratio $\delta_k$, while different groups may use different compression ratios. The BS employs the SFMA mechanism to fuse the semantic features of each selected user pair and then transmits the fused representation over a shared time-frequency resource block. We first describe the general SFMA method, then present the proposed SC-SFMA architecture. After that, we develop the communication, latency, energy, and distortion models used for resource allocation, and then culminate in the problem formulation.

\subsection{Semantic Feature Multiple Access}
\label{subsec:sfma}
For a given input image $\boldsymbol{x}_u \in \mathbb{R}^{H \times W \times 3}$, user $u\in\{i,j\}$, the semantic encoder $\boldsymbol{E}_{\boldsymbol{\psi}_u}(\cdot)$ extracts a feature representation
\begin{equation}
\label{eq:encoder}
{\boldsymbol{z}}_{u} = \boldsymbol{E}_{\boldsymbol{\psi}_u}(\boldsymbol{x}_u) \in \mathbb{R}^{H'\times W' \times C'},
\end{equation}
where $H'$, $W'$, $C'$ are the spatial and channel dimensions after encoding. A channel coding function $\mathcal{F}(\cdot)$ then maps $\boldsymbol{z}_u$ to a length-$d$ channel vector
\begin{equation}
\label{eq:channel_mapping}
\tilde{\boldsymbol{z}}_u = \mathcal{F}(\boldsymbol{z}_u, \Omega_u) \in \mathbb{R}^{d},
\end{equation}
where $\Omega_u$ carries side information shared between the BS and user $u$. For paired group $k$, the normalised signals $\boldsymbol{s}_u = \tilde{\boldsymbol{z}}_u / \sqrt{\mathbb{E}[\|\tilde{\boldsymbol{z}}_u\|_2^2]/d}$ of the two users are superposed as
\begin{equation}
\label{eq:tx_signal}
\boldsymbol{s}_k^{(i,j)} = \sqrt{{p_k}/{2}}\;(\boldsymbol{s}_i + \boldsymbol{s}_j),
\end{equation}
where $p_k$ is the transmit power for group $k$. The received signal at user $u$ in group $k$ is
\begin{equation}
\label{eq:rx_signal}
\boldsymbol{y}^{(i,j)}_{k,u} = h_u\,\boldsymbol{s}_k^{(i,j)} + \boldsymbol{n}_u,
\end{equation}
where $h_u$ denotes the channel gain from the BS to user $u$, and $\boldsymbol{n}_u \sim \mathcal{CN}(0, b_k N_0 \boldsymbol{I})$ is the additive white Gaussian noise, where $b_k$ is the bandwidth allocated to group $k$, $N_0$ is the noise power spectral density, and $\boldsymbol{I}$ is the identity matrix of matching dimension. The channel decoder $\mathcal{R}(\cdot)$ recovers the feature representation $\hat{\boldsymbol{z}}_u^{(i,j)} = \mathcal{R}(\boldsymbol{y}^{(i,j)}_{k,u}, \Omega_u) \in \mathbb{R}^{H' \times W' \times C'}$, and the semantic decoder $\boldsymbol{D}_{\boldsymbol{\phi}_u}(\cdot)$ reconstructs the image as
\begin{equation}
\label{eq:decoder}
\hat{\boldsymbol{x}}^{(i,j)}_u = \boldsymbol{D}_{\boldsymbol{\phi}_u}(\hat{\boldsymbol{z}}_u^{(i,j)}) \in \mathbb{R}^{H \times W \times 3}.
\end{equation}
 
\subsection{Proposed SC-SFMA Model}
\label{subsec:sc_sfma}
 
Building upon the SFMA framework, we propose the similarity-conditioned SFMA (SC-SFMA) model, whose architecture is illustrated in Fig.~\ref{fig:system_model}. The key components are described below.
 
\subsubsection{Hierarchical Encoder with SCFM}
The encoder follows a four-stage Swin Transformer~\cite{liu2021swin} hierarchy. Stage~$1$ partitions the input image into non-overlapping $2\times 2$ patches to form an initial feature $\boldsymbol{f}_u^0 \in \mathbb{R}^{\frac{H}{2}\times\frac{W}{2}\times C_1}$, where $C_1$ is the base embedding dimension. Therefore, the stage~$1$ feature has spatial size $H_1 \times W_1 = (H/2) \times (W/2)$. Each subsequent stage~$\ell\in\{2,3,4\}$ first applies patch merging, which halves the spatial resolution and doubles the channel dimension, transforming $\boldsymbol{f}_u^{\ell-1}$ into $\boldsymbol{g}_u^\ell \in \mathbb{R}^{H_\ell\times W_\ell\times C_\ell}$ with $H_\ell = H_{\ell-1}/2$, $W_\ell = W_{\ell-1}/2$, and $C_\ell = 2C_{\ell-1}$. At every stage, the features pass through a similarity-conditioned cross-attention feature modulation (SCFM) module, followed by $N_\ell$ Swin Transformer blocks, where $N_\ell$ denotes the number of blocks at stage $\ell$, producing the stage output $\boldsymbol{f}_u^\ell$. The final output $\boldsymbol{z}_u \triangleq \boldsymbol{f}_u^4 \in \mathbb{R}^{H'\times W'\times C'}$ with $H'=H/16$, $W'=W/16$, and $C'=8C_1$ is the semantic feature used in~\eqref{eq:encoder}.
 
The SCFM module is the core mechanism that makes SC-SFMA adaptive to inter-user semantic similarity. Let $\boldsymbol{a}_u^\ell$ denote the input to the SCFM module at stage $\ell$ ($\boldsymbol{f}_u^0$ for $\ell=1$ and $\boldsymbol{g}_u^\ell$ for $\ell\ge 2$). The module first computes a global cosine similarity $\alpha^\ell$ and a channel-wise mean absolute difference $\boldsymbol{\Delta}^\ell$ between the paired users' features. These two descriptors are fed into a dual-conditioned similarity modulator (DC-SimM), a three-layer fully-connected network that produces per-channel modulation weights $\boldsymbol{\gamma}^\ell \in (0,1)^{C_\ell}$ via a sigmoid output gate. Concurrently, the module applies windowed and shifted-window cross-attention~\cite{liu2021swin} between the paired users to obtain a cross-attention enhanced feature $\boldsymbol{e}_{u,\mathrm{cross}}^\ell$. The final SCFM output is
\begin{equation}
\label{eq:scfm_output}
\boldsymbol{e}_u^\ell = \boldsymbol{a}_u^\ell + \boldsymbol{\gamma}^\ell \odot (\boldsymbol{e}_{u,\mathrm{cross}}^\ell - \boldsymbol{a}_u^\ell),
\end{equation}
where $\odot$ denotes element-wise multiplication broadcast over the spatial dimension. When $\boldsymbol{\gamma}^\ell \to 0$ (low similarity), the output preserves the user's own features; when $\boldsymbol{\gamma}^\ell \to 1$ (high similarity), the output fully incorporates cross-user information.
 
\subsubsection{Rate-Adaptive Channel Coding and Decoding}
\label{subsubsec:channel_coding}
To support variable compression ratios, we adopt a Rate-ModNet mechanism~\cite{yang2024swinjscc}. Given a target compression ratio $\delta \in [\delta_{\min},1]$, a three-layer fully-connected network transforms $\delta$ into a per-channel rate modulation vector $\boldsymbol{r}_u \in (0,1)^{C'}$ via a sigmoid gate. The modulated feature $\boldsymbol{z}'_u = \boldsymbol{z}_u \odot \boldsymbol{r}_u$ is then processed by a code mask module that ranks channels by average activation magnitude and retains the top $C_s = \lfloor \delta \cdot C' \rfloor$ channels, producing a binary mask $\boldsymbol{M}_u \in \{0,1\}^{C'}$. The selected channels are compacted into $\tilde{\boldsymbol{Z}}_u \in \mathbb{R}^{H'\times W'\times C_s}$ and flattened to form the channel input $\tilde{\boldsymbol{z}}_u \in \mathbb{R}^d$ with $d = H'W'C_s$. The channel bandwidth ratio is accordingly $\mathrm{CBR} = d/(3HW)$.
 
At the receiver, the channel decoder $\mathcal{R}(\cdot)$ uses the known compression ratio $\delta$ and the mask side information $\Omega_u = \{\boldsymbol{M}_u\}$ to reconstruct the full-dimensional feature. In the online deployment stage, the compression ratio is instantiated by the group-specific value $\delta$ assigned to the scheduled pair.
 
\subsubsection{Semantic Decoding}
The decoder mirrors the encoder in reverse: four progressive stages apply Swin Transformer blocks followed by patch expansion to successively restore spatial resolution, generating the reconstructed image $\hat{\boldsymbol{x}}_u^{(i,j)}$ as in~\eqref{eq:decoder}. The complete framework is trained end-to-end by minimising the expected pairwise reconstruction loss:
\begin{equation}
\label{eq:training_loss}
\mathcal{L}_{\text{train}} = \mathbb{E}_{(\boldsymbol{x}_i,\boldsymbol{x}_j)}\!\left[\|\hat{\boldsymbol{x}}_i^{(i,j)} - \boldsymbol{x}_i\|_2^2 + \|\hat{\boldsymbol{x}}_j^{(i,j)} - \boldsymbol{x}_j\|_2^2\right].
\end{equation}

\subsection{Communication Model}
\label{subsec:comm_model}

Since the superposition is performed in the learned semantic-feature domain, the recovered feature of each user generally contains not only the desired semantic component but also the semantic component of the paired user. To capture this effect, we define the semantic interference factor $\rho_{ij}(p_k,\delta_k)\in[0,1]$ as a pair-level scaling factor \cite{wang2025generative}.
The transmit power $p_k$ determines the received SNR of the superposed signal, while the compression ratio $\delta_k$ determines the number of effective feature dimensions available for semantic separation at the decoder. 
Since this dependence is induced by the end-to-end learned transceiver, it is non-trivial to theoretically build an explicit form.
To overcome this obstacle, we employ the data regression method and observe from Fig.~\ref{fig:rho_3D} that: 1) $\rho_{ij}$~is monotonically non-increasing in both $p_k$ and $\delta_k$ and bounded within $[\rho^{\min}_{ij},\,\rho^{\max}_{ij}]\subseteq[0,1]$. 
2) The partial derivative with respect to each variable first increases and then decreases, exhibiting a sigmoid-like transition from~$\rho^{\max}_{ij}$ to~$\rho^{\min}_{ij}$.

\begin{figure}[t]
    \centering
    \includegraphics[width=0.9\linewidth]{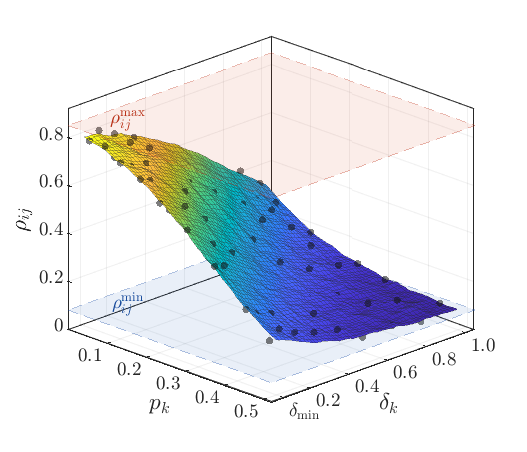}
    \caption{The semantic interference factor between users $i$ and $j$ in group $k$.}
    \label{fig:rho_3D}
\end{figure}

The above observations indicate that $\rho_{ij}$~follows a reverse~`S'-shaped surface over $(p_k,\delta_k)$.
We therefore adopt the following pair-dependent surrogate model
\begin{equation}
\label{eq:rho_power}
\rho_{ij}(p_k,\delta_k)\approx \rho_{ij}^{\min}
+\frac{\rho_{ij}^{\max}-\rho_{ij}^{\min}}
{1+\exp(a_{ij}p_k+b_{ij}\delta_k+d_{ij})},
\end{equation}
where $\rho_{ij}^{\min}$ and $\rho_{ij}^{\max}$ denote the lower and upper saturation levels of the semantic interference factor, respectively, $a_{ij}>0$ and $b_{ij}>0$ control the sensitivities to $p_k$ and $\delta_k$, and $d_{ij}$ is an offset parameter.
Increasing $\delta_k$ means that more semantic channels are preserved and the compression is lighter. Increasing $p_k$ preserves richer and more pair-discriminative semantic features, which also improves the separability of the two users. Therefore, the semantic interference factor $\rho_{ij}(p_k,\delta_k)$ is empirically non-increasing in both variables. The upper saturation $\rho_{ij}^{\max}$ corresponds to severely resource-limited regimes, while the lower saturation $\rho_{ij}^{\min}$ reflects the irreducible feature-space overlap imposed by the source distribution and model architecture.
Using $\rho_{ij}(p_k,\delta_k)$ and following \cite{wang2025generative}, the effective signal-to-interference-plus-noise ratio (SINR) of user $u$ on group $k$ is modelled as
\begin{equation}
\label{eq:sinr_model}
\Gamma_{ij,u}^{k}=
\frac{\frac{p_k}{2}|h_u|^2}
{\rho_{ij}(p_k,\delta_k)\frac{p_k}{2}|h_u|^2+b_kN_0},
\quad u\in\{i,j\}.
\end{equation}
The corresponding semantic-aware transmission rate is
\begin{equation}
R_{ij,u}^{k}(p_k,b_k,\delta_k)
=
b_k\log_2\!\left(1+\Gamma_{ij,u}^{k}\right),
\quad u\in\{i,j\},
\end{equation}
and the pair sum-rate on group $k$ is
\begin{equation}
\bar{R}_{ij}^{k}(p_k,b_k,\delta_k)
=
R_{ij,i}^{k}(p_k,b_k,\delta_k)
+
R_{ij,j}^{k}(p_k,b_k,\delta_k).
\end{equation}

\subsection{Latency and Energy Model}
\label{subsec:latency_energy}
Let $Q_u$ denote the source size of user $u\in\{i,j\}$. Since both users in group $k$ share the compression ratio $\delta_k$ which governs the Rate-ModNet mask for both users to implement the feature-domain superposition in \eqref{eq:tx_signal}, the transmission delay of user $u$ on group $k$ is
\begin{equation}
\label{eq:user_delay}
t_{ij,u}^k = \frac{Q_u\delta_k}{R_{ij,u}^k}.
\end{equation}
Let $\tau_{ij}^{\mathrm{BS}} = C_{ij}^{\mathrm{BS}}/f_{\mathrm{BS}}$ and $\tau_u^{\mathrm{dec}} = C_u^{\mathrm{dec}}/f_u$ denote the BS-side encoding latency and the user-side decoding latency, respectively, where $C_{ij}^{\mathrm{BS}}$ and $C_u^{\mathrm{dec}}$ are the corresponding computation loads, and $f_{\mathrm{BS}}$ and $f_u$ are the CPU frequencies. The end-to-end latency of pair $(i,j)$ on group $k$ is
\begin{equation}
\label{eq:e2e_latency}
T_{ij}^k
=
\tau_{ij}^{\mathrm{BS}}
+
\max\left\{t_{ij,i}^k+\tau_i^{\mathrm{dec}},\; t_{ij,j}^k+\tau_j^{\mathrm{dec}}\right\}.
\end{equation}
The total energy consumption of pair $(i,j)$ on group $k$ is
\begin{equation}
\label{eq:total_energy}
E_{ij}^k
=
p_k\max\left\{t_{ij,i}^k,\; t_{ij,j}^k\right\}
+
\zeta\ln\frac{1}{\delta_k},
\end{equation}
where the first term is the communication energy, and the second term is the computation energy induced by semantic compression.

\subsection{Distortion Model}
\label{subsec:distortion_model}
In the proposed SC-SFMA framework, the reconstruction quality of each user depends jointly on its paired user and the adopted compression ratio. Therefore, for any candidate pair $(i,j)\in\mathcal{P}$, any compression ratio $\delta\in[\delta_{\min},1]$, and either user $u\in\{i,j\}$, the reconstruction distortion is
\begin{equation}
\label{eq:pair_distortion_curve}
D_{u,ij}(\delta)
\triangleq
\mathbb{E}\!\left[\frac{\|\boldsymbol{x}_u-\hat{\boldsymbol{x}}_u^{(i,j)}(\delta)\|_2^2}{3HW}\right],
\quad u\in\{i,j\}.
\end{equation}
In the offline stage, the BS evaluates each candidate pair on a compression ratio grid $\mathcal{D}=\{\bar\delta_1,\ldots,\bar\delta_L\}\subseteq[\delta_{\min},1]$ and constructs a monotone nonincreasing piecewise-linear upper envelope $\widehat D_{u,ij}(\delta)$ for the empirical distortion samples. The envelope $\widehat D_{u,ij}(\delta)$ is used online because it preserves conservative distortion feasibility while keeping the optimisation tractable.
For each user $u\in\{i,j\}$ and pair $(i,j)$, the minimum compression ratio that satisfies the distortion requirement $D_u^{\max}$ is
\begin{equation}
\label{eq:distortion_inverse}
\delta_{u,ij}^{\mathrm D}
\triangleq
\inf\left\{\delta\in[\delta_{\min},1]\,\big|\,\widehat D_{u,ij}(\delta)\le D_u^{\max}\right\}.
\end{equation}
The pair-level distortion-feasible lower bound is then
\begin{equation}
\label{eq:distortion_lower_bound}
\underline\delta_{ij}^{\mathrm D}
\triangleq
\max\left\{\delta_{\min},\delta_{i,ij}^{\mathrm D},\delta_{j,ij}^{\mathrm D}\right\}.
\end{equation}

\subsection{Problem Formulation}
\label{subsec:problem_formulation}
Given the system model above, our objective is to maximise the total semantic-aware transmission rate of the SC-SFMA network while guaranteeing per-user reconstruction quality, end-to-end latency, total energy budget, and the power and bandwidth budgets. 
Mathematically, the sum-rate maximisation problem can be formulated as
\begin{subequations}\label{eq:P1}
\begin{align}
\max_{\mathbf{w},\mathbf{p},\mathbf{b},\boldsymbol{\delta}} \;
& \sum_{k=1}^{K}\sum_{(i,j)\in\mathcal{P}} w_{ij}^k \bar R_{ij}^k
\label{eq:P1_obj} \tag{\theequation},
\\
\text{s.t.} \hspace{0.8em}
& \sum_{k=1}^{K}\sum_{(i,j)\in\mathcal{P}:u\in\{i,j\}} w_{ij}^k \widehat D_{u,ij}(\delta_k) \le D_u^{\max}, \forall u\in\mathcal{N},
\label{eq:P1_distortion}
\\
& \sum_{(i,j)\in\mathcal{P}} w_{ij}^k T_{ij}^k \le T^{\max}, \quad \forall k\in\mathcal{K},
\label{eq:P1_latency}
\\
& \sum_{k=1}^{K}\sum_{(i,j)\in\mathcal{P}} w_{ij}^k E_{ij}^k \le E^{\max},
\label{eq:P1_energy}
\\
& \sum_{k=1}^{K} p_k \le P^{\max},
\label{eq:P1_power}
\\
& \sum_{k=1}^{K} b_k \le B^{\max},
\label{eq:P1_bandwidth}
\\
& \delta_{\min}\le \delta_k \le 1,\quad \forall k\in\mathcal{K},
\label{eq:P1_delta}
\\
& \sum_{(i,j)\in\mathcal{P}} w_{ij}^k = 1,\quad \forall k\in\mathcal{K},
\label{eq:P1_group_pair}
\\
& \sum_{k=1}^{K}\sum_{(i,j)\in\mathcal{P}:u\in\{i,j\}} w_{ij}^k = 1,
\quad \forall u\in\mathcal{N},
\label{eq:P1_user_once}
\\
& w_{ij}^k\in\{0,1\},\quad \forall (i,j)\in\mathcal{P},\ \forall k\in\mathcal{K},
\label{eq:P1_binary}
\\
& p_k\ge 0,\; b_k\ge 0,\quad \forall k\in\mathcal{K},
\label{eq:P1_nonnegative}
\end{align}
\end{subequations}
where $\mathbf{p}=[p_1,\ldots,p_K]^\top$, $\mathbf{b}=[b_1,\ldots,b_K]^\top$, $\boldsymbol{\delta}=[\delta_1,\ldots,\delta_K]^\top$, and $\mathbf{w}=\{w_{ij}^k\}_{(i,j)\in\mathcal{P},k\in\mathcal{K}}$. 

Constraint \eqref{eq:P1_distortion} enforces the distortion requirement of each user. 
Constraint \eqref{eq:P1_latency} imposes the per-group end-to-end latency budget. Constraint \eqref{eq:P1_energy} limits the total energy consumption, while \eqref{eq:P1_power} and \eqref{eq:P1_bandwidth} impose the power and bandwidth budgets. Constraint \eqref{eq:P1_delta} bounds the admissible compression ratio of each group, and \eqref{eq:P1_group_pair} together with \eqref{eq:P1_user_once} enforces one-to-one pairing across all groups.

\section{Algorithm Design}
\label{sec:solution}
Problem~\eqref{eq:P1} is a mixed-integer non-convex problem. It is difficult to solve directly because the binary pairing variables are coupled with the continuous resource variables through the semantic-interference surface in \eqref{eq:rho_power}, while the compression ratio further affects the rate, latency, energy, and distortion terms simultaneously. To obtain a tractable solution, we adopt a three-block alternating-optimisation framework that updates the compression ratio allocation, the power-bandwidth allocation, and the user pairing in turn. Before presenting these online updates, we first describe the offline profiling and feasibility-pruning steps, which convert the learned SC-SFMA behaviour into tractable pair-level functions and remove candidate pairs that can never appear in any feasible solution.

The offline stage profiles the distortion envelopes $\{\widehat D_{u,ij}(\cdot)\}$ and the semantic-interference samples $\{\widehat \rho_{ij}(\bar p_n,\bar\delta_\ell)\}$, and then fits the logistic parameters in \eqref{eq:rho_power}. Using the distortion-feasible lower bounds in \eqref{eq:distortion_lower_bound} and the residual time budgets in \eqref{eq:time_budget}, we prune permanently infeasible pairs through
\begin{equation}
\label{eq:feasible_edge}
\mathcal{E}_{\mathrm f}
=
\left\{
(i,j)\in\mathcal{P}\,\big|\,
\underline\delta_{ij}^{\mathrm D}\le 1,\;
\bar T_{ij,i}>0,\;
\bar T_{ij,j}>0
\right\}.
\end{equation}
Only edges in $\mathcal{E}_{\mathrm f}$ are retained in the online optimisation stage. The following proposition confirms that restricting the candidate pool to $\mathcal{E}_{\mathrm f}$ incurs no loss of optimality.

\begin{proposition}
\label{prop:edge_equivalence}
Removing all candidate pairs outside $\mathcal{E}_{\mathrm f}$ does not change the feasible set of problem~\eqref{eq:P1}.
\end{proposition}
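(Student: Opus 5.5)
The plan is a two-sided set inclusion, with the inclusion that needs work argued by contraposition at the level of a single pair. Let $\mathcal{F}$ be the feasible set of \eqref{eq:P1}. Let $\mathcal{F}_{\mathrm f}$ be the feasible set obtained by additionally imposing $w_{ij}^k=0$ for all $(i,j)\notin\mathcal{E}_{\mathrm f}$ and all $k\in\mathcal{K}$.

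The inclusion $\mathcal{F}_{\mathrm f}\subseteq\mathcal{F}$ is immediate, since the restricted problem only adds constraints. For the reverse inclusion, I take an arbitrary $(\mathbf{w},\mathbf{p},\mathbf{b},\boldsymbol{\delta})\in\mathcal{F}$ and show that $w_{ij}^k=1$ forces $(i,j)\in\mathcal{E}_{\mathrm f}$. Equivalently, for every pair outside $\mathcal{E}_{\mathrm f}$ and every choice of continuous variables, activating it violates some constraint. By \eqref{eq:feasible_edge}, a pair lies outside $\mathcal{E}_{\mathrm f}$ for one of two reasons, which I treat as two cases.

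\emph{Case 1: $\underline\delta_{ij}^{\mathrm D}>1$.} By \eqref{eq:distortion_lower_bound}, this means $\delta_{u,ij}^{\mathrm D}>1$ for some $u\in\{i,j\}$. By \eqref{eq:distortion_inverse}, the set $\{\delta\in[\delta_{\min},1]:\widehat D_{u,ij}(\delta)\le D_u^{\max}\}$ is then empty, using the convention $\inf\emptyset=+\infty$. Hence $\widehat D_{u,ij}(\delta)>D_u^{\max}$ for every admissible $\delta$.

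Now suppose $w_{ij}^k=1$. By \eqref{eq:P1_user_once}, user $u$ belongs to exactly one active pair, so the left-hand side of \eqref{eq:P1_distortion} collapses to the single term $\widehat D_{u,ij}(\delta_k)$. Since $\delta_k\in[\delta_{\min},1]$ by \eqref{eq:P1_delta}, this term exceeds $D_u^{\max}$, and \eqref{eq:P1_distortion} is violated. The monotonicity of the envelope makes this step transparent, but it is not strictly needed.

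\emph{Case 2: $\bar T_{ij,u}\le 0$ for some $u\in\{i,j\}$.} The definition \eqref{eq:time_budget} appears later in the paper. I will take it to be the residual transmission time budget $\bar T_{ij,u}=T^{\max}-\tau_{ij}^{\mathrm{BS}}-\tau_u^{\mathrm{dec}}$. If $w_{ij}^k=1$, then by \eqref{eq:P1_group_pair} the sum in \eqref{eq:P1_latency} reduces to $T_{ij}^k$. By \eqref{eq:e2e_latency}, $T_{ij}^k\ge \tau_{ij}^{\mathrm{BS}}+\tau_u^{\mathrm{dec}}+t_{ij,u}^k$.

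It then suffices to show $t_{ij,u}^k>0$. This holds because $Q_u>0$, $\delta_k\ge\delta_{\min}>0$, and $R_{ij,u}^k<\infty$ for finite $p_k,b_k$. Note that $\Gamma\le 1/\rho_{ij}$, so the rate stays finite, or the rate is zero and the delay is infinite. Either way $T_{ij}^k>T^{\max}$, violating \eqref{eq:P1_latency}.

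The main obstacle is Case 2. Its exact form depends on the precise definition of $\bar T_{ij,u}$ in \eqref{eq:time_budget}. If that definition also subtracts a minimum transmission time or an energy-related term, I would prove a corresponding strictly positive lower bound on $t_{ij,u}^k$, valid under \eqref{eq:P1_power}, \eqref{eq:P1_bandwidth} and \eqref{eq:P1_delta}. One route is to bound $R_{ij,u}^k\le b_k\log_2\!\big(1+\frac{1}{\rho_{ij}}\big)$, with $b_k\le B^{\max}$ and $\rho_{ij}\ge\rho_{ij}^{\min}$. I would also need to check that the edge cases $\rho_{ij}^{\min}=0$ and $b_k=0$ do not produce a zero delay. $b_k=0$ gives zero rate and hence infinite delay, which is harmless.

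With both cases in hand, every feasible point has $w_{ij}^k=0$ for all $(i,j)\notin\mathcal{E}_{\mathrm f}$, so $\mathcal{F}\subseteq\mathcal{F}_{\mathrm f}$ and the two feasible sets coincide.
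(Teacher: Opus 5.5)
Your proof is correct and follows the same route as the paper. A pair outside $\mathcal{E}_{\mathrm f}$ fails either the distortion-feasibility test, so \eqref{eq:P1_distortion} cannot hold for any admissible $\delta_k$, or the residual-time test, so \eqref{eq:P1_latency} cannot hold; hence no feasible point activates it. You make explicit three points the paper leaves implicit: the trivial reverse inclusion, the convention $\inf\emptyset=+\infty$, and the strict positivity of $t_{ij,u}^k$ needed in the boundary case $\bar T_{ij,u}=0$.

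The concern in your last paragraph is unnecessary. Equation \eqref{eq:time_budget} is exactly $\bar T_{ij,u}=T^{\max}-\tau_{ij}^{\mathrm{BS}}-\tau_u^{\mathrm{dec}}$. Also, for $b_k>0$ the rate is finite by $\Gamma_{ij,u}^k\le p_k|h_u|^2/(2b_kN_0)$, which covers $\rho_{ij}=0$ without appealing to $1/\rho_{ij}$.
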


\begin{proof}
For any $(i,j)\notin\mathcal{E}_{\mathrm f}$, either $\underline\delta_{ij}^{\mathrm D}>1$, which means no admissible compression ratio can satisfy \eqref{eq:P1_distortion}, or $\bar T_{ij,u}\le 0$ for some $u\in\{i,j\}$, which means \eqref{eq:P1_latency} cannot be met. Hence no feasible solution of \eqref{eq:P1} can use an edge outside $\mathcal{E}_{\mathrm f}$.
\end{proof}

With the feasible edge set $\mathcal{E}_{\mathrm{f}}$ established, we now develop the three-block alternating-optimisation algorithm that solves problem~\eqref{eq:P1} over the reduced candidate set. 


\subsection{Compression Ratio Sub-problem}
\label{subsec:delta_solution}
We first optimise the compression ratio block under fixed feasible pairing $\mathbf{w}$ and radio resources $(\mathbf{p},\mathbf{b})$. Under this condition, the only remaining coupling across groups comes from the total energy budget. This makes the compression ratio block particularly suitable for dual decomposition: once the global energy constraint is priced by a single multiplier, the optimisation separates into $K$ scalar sub-problems, each depending only on $\delta_k$.
Let $(i_k,j_k)$ be the unique pair assigned to group $k$, that is, $w_{i_kj_k}^k=1$. 
The compression ratio sub-problem is formulated as
\begin{subequations}
\label{eq:Pdelta}
\begin{align}
\max_{\boldsymbol{\delta}}\quad
& \sum_{k=1}^{K}\bar R_{i_kj_k}^k(p_k,b_k,\delta_k)
\label{eq:Pdelta_obj} \tag{\theequation},
\\
\text{s.t.}\quad
& \underline\delta_{i_kj_k}^{\mathrm D}\le \delta_k\le 1,\quad \forall k\in\mathcal{K},
\label{eq:Pdelta_range}
\\
& T_{i_kj_k}^k(p_k,b_k,\delta_k)\le T^{\max},\quad \forall k\in\mathcal{K},
\label{eq:Pdelta_latency}
\\
& \sum_{k=1}^{K}E_{i_kj_k}^k(p_k,b_k,\delta_k)\le E^{\max}.
\label{eq:Pdelta_energy}
\end{align}
\end{subequations}

The only remaining cross-group coupling in \eqref{eq:Pdelta} lies in the total energy constraint \eqref{eq:Pdelta_energy}. We therefore dualise this constraint so that the compression ratio update decomposes into independent group-wise searches while the multiplier $\lambda\ge 0$ coordinates the global energy budget across groups. Therefore, the corresponding Lagrangian is
\begin{align}
\label{eq:delta_lagrangian}
\mathcal{L}_{\delta}(\boldsymbol{\delta},\lambda)  &= \lambda \left(E^{\max}- \sum_{k=1}^{K}E_{i_kj_k}^k(p_k,b_k,\delta_k)\right)\notag \\
& +\sum_{k=1}^{K}\bar R_{i_kj_k}^k(p_k,b_k,\delta_k).
\end{align}
For fixed $\lambda$, the variables $\{\delta_k\}$ decouple across groups, and group $k$ solves the one-dimensional problem
\begin{equation}
\label{eq:Pdelta_k}
\max_{\substack{\underline\delta_{i_kj_k}^{\mathrm D}\le \delta_k\le 1\\ T_{i_kj_k}^k(p_k,b_k,\delta_k)\le T^{\max}}}
\bar R_{i_kj_k}^k(p_k,b_k,\delta_k)-\lambda E_{i_kj_k}^k(p_k,b_k,\delta_k).
\end{equation}
Since \eqref{eq:Pdelta_k} is one-dimensional over a closed and bounded feasible interval, any interior maximiser must satisfy the first-order stationarity condition, which can be stated in semi-closed form. Let $\phi_k = \exp(a_{i_kj_k}p_k+b_{i_kj_k}\delta_k+d_{i_kj_k})$, and calculate the logistic derivative of $\rho_{i_kj_k}$ with respect to $\delta_k$ as $\rho^\prime_{i_kj_k} = -\frac{b_{i_kj_k}(\rho^{\max}_{i_kj_k}-\rho^{\min}_{i_kj_k})\phi_k}{(1+\phi_k)^2}<0$. The derivative of the per-user rate with respect to $\delta_k$ evaluates to $\frac{\partial R^k_{i_kj_k,u}}{\partial \delta_k} = \frac{-b_k{p^2_k h^4_u} \rho^\prime_{i_kj_k}}{\ln2(2b_kN_0+\rho_{i_kj_k}{p_k h^2_u})(2b_kN_0+\rho_{i_kj_k}{p_k h^2_u})}, u\in\{i_k,j_k\}$. Let $u^\star=\arg \max_{u\in\{i_k,j_k\}}t^k_{i_kj_k,u}$ denote the latency-dominant user. Differentiating the objective of \eqref{eq:Pdelta_k} and equating to zero yields the stationarity condition:
\begin{equation} \label{eq:newton}
\sum_{u} 
\frac{\partial R_{i_k j_k, u}^k}{\partial \delta_k}=\lambda \left[ \frac{p_kQ_{u^*}(R_{i_k j_k, u^*}^k - \delta_k \frac{\partial R_{i_k j_k, u^*}^k}{\partial \delta_k})}{\left( R_{i_k j_k, u^*}^k\right)^2}-\frac{\zeta}{\delta_k}\right],
\end{equation}
where $u\in\{i_k,j_k\}$. Because \eqref{eq:newton} is not amenable to a further closed-form solution in $\delta_k$, Newton's method is applied to calculate $\delta_k$ for each user group.



After obtaining the per-group maximisers for a given $\lambda$, the multiplier is updated by the subgradient step
\begin{equation}
\label{eq:delta_dual_update}
\lambda^{(s+1)}
=
\left[
\lambda^{(s)}
+
\beta_{\delta}^{(s)}
\left(
\sum_{k=1}^{K}E_{i_kj_k}^k(p_k,b_k,\delta_k^{(s)})-E^{\max}
\right)
\right]^+,
\end{equation}
where $\beta_{\delta}^{(s)}>0$ is the step size. Algorithm~\ref{alg:delta_update} summarises the optimisation procedure of sub-problem \eqref{eq:Pdelta}.

\begin{algorithm}[t]
\caption{Semantic Compression Ratio Optimization}
\label{alg:delta_update}
\begin{algorithmic}[1]
\STATE \textbf{Input:} fixed pairing $\mathbf{w}$, fixed radio resources $(\mathbf{p},\mathbf{b})$, initial multiplier $\lambda^{(0)}\ge 0$, step-size sequence $\{\beta_{\delta}^{(s)}\}$, search tolerance $\varepsilon_{\delta}$, and coarse grid size $M_{\delta}$.
\FOR{$s=0,1,\ldots$}
    \FOR{each $k\in\mathcal{K}$}
        \STATE Determine the selected pair $(i_k,j_k)$ from $\mathbf{w}$.
        \STATE Identify the feasible interval of $\delta_k$ from \eqref{eq:Pdelta_k}.
        \IF{the feasible interval is empty}
            \STATE \textbf{return} infeasible.
        \ENDIF
        \STATE Generate $M_{\delta}$ feasible samples for $\delta_k$ and choose the best one as the initial point.
        \STATE Refine the point by Newton's method on \eqref{eq:Pdelta_k} to tolerance $\varepsilon_{\delta}$, and obtain $\delta_k^{(s+1)}$.
    \ENDFOR
    \STATE Update $\lambda^{(s+1)}$ by \eqref{eq:delta_dual_update}.
    \IF{$\left|\sum_{k=1}^{K}E_{i_kj_k}^k(p_k,b_k,\delta_k^{(s+1)})-E^{\max}\right|<\varepsilon_{\delta}$}
        \STATE \textbf{break}
    \ENDIF
\ENDFOR
\STATE \textbf{Output:} $\boldsymbol{\delta}$.
\end{algorithmic}
\end{algorithm}

\subsection{Power-Bandwidth Sub-problem}
\label{subsec:pb_solution}
Fix the pairing $\mathbf{w}$ and the compression ratio vector $\boldsymbol{\delta}$, and let $(i_k,j_k)$ be the pair assigned to group $k$.
We next optimise the power-bandwidth block. 
In this case, the latency-induced rate floors become fixed thresholds, but the objective and the rate constraints remain non-convex because the semantic-interference term $\rho_{i_k j_k}(p_k,\delta_k)$ still depends non-linearly on $p_k$. Therefore, a direct convex reformulation is no longer available.
From the delay model in \eqref{eq:user_delay}, we define the residual transmission-time budget of user $u\in\{i_k,j_k\}$ under pair $(i_k,j_k)$ as
\begin{equation}
\label{eq:time_budget}
\bar T_{i_kj_k,u} \triangleq T^{\max}-\tau_{i_kj_k}^{\mathrm{BS}}-\tau_u^{\mathrm{dec}}.
\end{equation}
Then, the latency constraint for user $u$ becomes 
\begin{equation}
R_{i_kj_k,u}^k(p_k,b_k,\delta_k)\ge \frac{Q_u\delta_k}{\bar T_{i_kj_k,u}}.
\end{equation}
Moreover, the communication-energy term in \eqref{eq:total_energy} is upper-bounded by $p_k\max\{\bar T_{i_kj_k,i_k},\bar T_{i_kj_k,j_k}\}$. This yields the following tractable power-bandwidth sub-problem:
\begin{subequations}
\label{eq:Ppb}
\begin{align}
\max_{\mathbf{p},\mathbf{b}}
& \sum_{k=1}^{K}\Big(R_{i_kj_k,i_k}^k(p_k,b_k,\delta_k)+R_{i_kj_k,j_k}^k(p_k,b_k,\delta_k)\Big)
\label{eq:Ppb_obj} \tag{\theequation},
\\
\text{s.t.}\quad
& R_{i_kj_k,i_k}^k(p_k,b_k,\delta_k)\ge \frac{Q_{i_k}\delta_k}{\bar T_{i_kj_k,i_k}},\quad \forall k\in\mathcal{K},
\label{eq:Ppb_rate_i}
\\
& R_{i_kj_k,j_k}^k(p_k,b_k,\delta_k)\ge \frac{Q_{j_k}\delta_k}{\bar T_{i_kj_k,j_k}},\quad \forall k\in\mathcal{K},
\label{eq:Ppb_rate_j}
\\
& \sum_{k=1}^{K}\left(p_k\max\{\bar T_{i_kj_k,i_k},\bar T_{i_kj_k,j_k}\}+\zeta\ln\frac{1}{\delta_k}\right)\le E^{\max},
\label{eq:Ppb_energy}
\\
& \eqref{eq:P1_power}, \eqref{eq:P1_bandwidth}, \eqref{eq:P1_nonnegative}. \notag
\label{eq:Ppb_nonnegative}
\end{align}
\end{subequations}

To construct a local convex surrogate, we aim to isolate the non-linear dependence on $p_k$ brought by $\rho_{i_k j_k}(p_k,\delta_k)$. For fixed $\delta_k$, the pair-dependent term $\rho_{i_k j_k}(p_k,\delta_k)$ becomes a univariate non-linear function of $p_k$. Since it appears in the SINR through $q_k(p_k)=p_k\rho_{i_k j_k}(p_k,\delta_k)$, we introduce this product explicitly and linearise it around the current iterate. 
Using \eqref{eq:sinr_model}, the exact rate of user $u\in\{i_k,j_k\}$ is rewritten as
\begin{equation}
\label{eq:rate_q_form}
 R_{i_kj_k,u}^k(p_k,b_k,\delta_k) = b_k\log_2 \frac{N_0+\frac{p_k h_u^2}{2b_k}(1+\rho_{i_kj_k}(p_k,\delta_k))}{N_0+\frac{p_k h_u^2}{2b_k}\rho_{i_kj_k}(p_k,\delta_k)} .
\end{equation}
The resulting optimisation problem \eqref{eq:Ppb_obj} is non-convex. We therefore adopt a SCA strategy, in which the non-linear term $q_k(p_k)$ and the non-convex part of the rate expression are replaced by locally tight first-order models around the current iterate. Since this approximation is only reliable locally, we embed the update into a trust-region mechanism.
At the $m$-th inner SCA iteration, we linearly approximate the product $p_k\rho_{i_kj_k}(p_k,\delta_k)$ around $p_k^{(m)}$ as
\begin{equation}
\label{eq:qk_taylor}
\begin{aligned}
\widehat q_k^{(m)}(p_k)
&=
p_k^{(m)}\rho_{i_kj_k}(p_k^{(m)},\delta_k)
\\
&+
\left(
\rho_{i_kj_k}(p_k^{(m)},\delta_k) +p_k^{(m)}\iota
\right)
(p_k-p_k^{(m)}),
\end{aligned}
\end{equation}
where $\iota=\frac{\partial \rho_{i_kj_k}(p_k,\delta_k)}{\partial p_k}|_{p_k=p_k^{(m)}}$.
Substituting \eqref{eq:qk_taylor} into the two logarithmic terms in \eqref{eq:rate_q_form} gives
\begin{equation}
\label{eq:rplus_hat}
\widehat r_{i_kj_k,u}^{+,(m)}(p_k,b_k)
=
\frac{b_k}{\ln 2}\ln\left(N_0+\frac{|h_u|^2}{2b_k}\Big[p_k+\widehat q_k^{(m)}(p_k)\Big]\right),
\end{equation}
and
\begin{equation}
\label{eq:rminus_hat}
\widehat r_{i_kj_k,u}^{-,(m)}(p_k,b_k)
=
\frac{b_k}{\ln 2}\ln\left(N_0+\frac{|h_u|^2}{2b_k}\widehat q_k^{(m)}(p_k)\right).
\end{equation}
Both functions are concave over any trust region where $\widehat q_k^{(m)}(p_k)\ge 0$. Linearising the second concave term at $(p_k^{(m)},b_k^{(m)})$ yields the surrogate lower bound
\begin{align}
\label{eq:rate_lower_bound}
\underline R_{i_kj_k,u}^{k,(m)}(p_k,&b_k)
=
\widehat r_{i_kj_k,u}^{+,(m)}(p_k,b_k)
-
\widehat r_{i_kj_k,u}^{-,(m)}\!\left(p_k^{(m)},b_k^{(m)}\right)
\nonumber\\
&-
\nabla \widehat r_{i_kj_k,u}^{-,(m)}\!\left(p_k^{(m)},b_k^{(m)}\right)^\top
\begin{bmatrix}
p_k-p_k^{(m)}\\
b_k-b_k^{(m)}
\end{bmatrix}.
\end{align}
Let the objective value of problem~\eqref{eq:P1} be $\Xi(\mathbf{w},\mathbf{p},\mathbf{b},\boldsymbol{\delta})
\triangleq \sum_{k=1}^{K}\sum_{(i,j)\in\mathcal{P}} w_{ij}^k\bar R_{ij}^k$.
Correspondingly, we define the surrogate objective value for each iteration $m$ as
$\widehat\Xi^{(m)}(\mathbf{p},\mathbf{b})\triangleq \sum_{k=1}^{K}\Big(\underline R_{i_kj_k,i_k}^{k,(m)}+\underline R_{i_kj_k,j_k}^{k,(m)}\Big)$.
Hence, at the $m$-th iteration, we solve the following convex surrogate problem:
\begin{subequations}
\label{eq:Ppb_cvx}
\begin{align}
\max_{\mathbf{p},\mathbf{b}}\quad
& \widehat\Xi^{(m)}(\mathbf{p},\mathbf{b})
\label{eq:Ppb_cvx_obj} \tag{\theequation},
\\
\text{s.t.}\quad
& \underline R_{i_kj_k,i_k}^{k,(m)} \ge \frac{Q_{i_k}\delta_k}{\bar T_{i_kj_k,i_k}},\quad \forall k\in\mathcal{K},
\label{eq:Ppb_cvx_rate_i}
\\
& \underline R_{i_kj_k,j_k}^{k,(m)} \ge \frac{Q_{j_k}\delta_k}{\bar T_{i_kj_k,j_k}},\quad \forall k\in\mathcal{K},
\label{eq:Ppb_cvx_rate_j}
\\
& |p_k-p_k^{(m)}|\le \Delta_k^{(m)},\quad \forall k\in\mathcal{K},
\label{eq:Ppb_cvx_trust}
\\
& \widehat q_k^{(m)}(p_k)\ge 0,\quad \forall k\in\mathcal{K},
\label{eq:Ppb_cvx_qhat_nonnegative}
\\
& \eqref{eq:Ppb_energy}, \eqref{eq:P1_bandwidth}, \eqref{eq:P1_nonnegative}. \notag
\end{align}
\end{subequations}

\begin{lemma}
\label{lem:pb_convex}
For any fixed inner iterate $(\mathbf{p}^{(m)},\mathbf{b}^{(m)})$ and trust-region radii $\{\Delta_k^{(m)}\}$, the problem~\eqref{eq:Ppb_cvx} is convex.
\end{lemma}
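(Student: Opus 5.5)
The plan is to check convexity term by term: the objective of \eqref{eq:Ppb_cvx} should be concave in $(\mathbf{p},\mathbf{b})$, and every constraint should define a convex set. The central observation is that, for fixed $(p_k^{(m)},b_k^{(m)},\delta_k)$, the linearised product $\widehat q_k^{(m)}(p_k)$ in \eqref{eq:qk_taylor} is an affine function of $p_k$ alone. Hence the arguments $p_k+\widehat q_k^{(m)}(p_k)$ and $\widehat q_k^{(m)}(p_k)$ appearing in \eqref{eq:rplus_hat} and \eqref{eq:rminus_hat} are affine in $p_k$.

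\textbf{Concavity of $\widehat r^{+,(m)}$.} I would write
\[
\widehat r_{i_kj_k,u}^{+,(m)}(p_k,b_k)=\frac{1}{\ln 2}\Big(b_k\ln\big(x_k/b_k\big)\Big),\qquad x_k=b_kN_0+\tfrac{|h_u|^2}{2}\big(p_k+\widehat q_k^{(m)}(p_k)\big).
\]
Here $x_k$ is affine in $(p_k,b_k)$. The function $(x,b)\mapsto b\ln(x/b)$ is the perspective of the concave function $\ln$, so it is jointly concave on $x,b>0$. Composing it with an affine map preserves concavity. Under \eqref{eq:Ppb_cvx_qhat_nonnegative} and $p_k\ge0$ we have $x_k\ge b_kN_0>0$ whenever $b_k>0$, so the expression stays on the domain of the perspective.

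\textbf{The surrogate $\underline R^{k,(m)}$.} In \eqref{eq:rate_lower_bound}, $\widehat r^{-,(m)}$ is replaced by its first-order expansion at $(p_k^{(m)},b_k^{(m)})$, which is affine in $(p_k,b_k)$. Therefore $\underline R_{i_kj_k,u}^{k,(m)}$ is concave minus affine, which is concave. Summing over $k$ and $u$ shows that $\widehat\Xi^{(m)}$ is concave, so maximising it is a convex objective.

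\textbf{Constraints.}
\begin{itemize}
\item \eqref{eq:Ppb_cvx_rate_i} and \eqref{eq:Ppb_cvx_rate_j} are superlevel sets of concave functions against constants, hence convex.
\item The trust region \eqref{eq:Ppb_cvx_trust} is an interval, hence convex.
\item \eqref{eq:Ppb_cvx_qhat_nonnegative} is an affine inequality in $p_k$.
\item The energy constraint \eqref{eq:Ppb_energy} is affine in $\mathbf{p}$, since $\bar T$ and $\delta_k$ are fixed.
\item \eqref{eq:P1_bandwidth} and \eqref{eq:P1_nonnegative} are linear.
\end{itemize}
The intersection of these sets is convex, so the problem is convex.

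\textbf{Main obstacle.} The delicate point is the boundary $b_k=0$, where the perspective must be extended by closure (with value $0$ when $b_k=0$). I expect to handle this by restricting to $b_k>0$, which every rate-floor constraint forces whenever $Q_u\delta_k>0$. Note also that the concavity of $\widehat r^{-,(m)}$ is not actually needed, because only its linearisation enters the problem.
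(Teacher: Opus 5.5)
Your proposal is correct and follows the paper's proof essentially step for step. The steps are the same: $\widehat q_k^{(m)}$ is affine; $\widehat r^{+,(m)}_{i_kj_k,u}$ is concave as the perspective of the logarithm composed with an affine map; $\widehat r^{-,(m)}_{i_kj_k,u}$ is replaced by an affine first-order expansion; and the result follows from convex superlevel sets together with affine remaining constraints. Your extra remarks are correct refinements the paper leaves implicit: concavity of $\widehat r^{-,(m)}_{i_kj_k,u}$ is needed only for the lower-bound property, not for convexity, and the rate floors keep $b_k>0$ so the perspective stays on its domain.
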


\begin{proof}
Because $\widehat q_k^{(m)}(p_k)$ in \eqref{eq:qk_taylor} is affine in $p_k$, both \eqref{eq:rplus_hat} and \eqref{eq:rminus_hat} are perspectives of the concave logarithm composed with positive affine arguments, and are therefore concave in $(p_k,b_k)$ over the affine domain enforced by \eqref{eq:Ppb_cvx_qhat_nonnegative}. The first-order Taylor expansion of the concave function $\widehat r_{i_kj_k,u}^{-,(m)}$ is an affine upper bound. Hence the surrogate rate $\underline R_{i_kj_k,u}^{k,(m)}$ in \eqref{eq:rate_lower_bound} is concave. The objective in \eqref{eq:Ppb_cvx_obj} is therefore concave. Constraints \eqref{eq:Ppb_cvx_rate_i} and \eqref{eq:Ppb_cvx_rate_j} are superlevel sets of concave functions and are convex, while the remaining constraints are affine. Therefore, problem~\eqref{eq:Ppb_cvx} is convex.
\end{proof}

\begin{algorithm}[t]
\caption{Power and Bandwidth Optimisation}
\label{alg:pb_update}
\begin{algorithmic}[1]
\STATE \textbf{Input:} fixed pairing $\mathbf{w}$, fixed $\boldsymbol{\delta}$, initial $(\mathbf{p}^{(0)},\mathbf{b}^{(0)})$, radii $\{\Delta_k^{(0)}\}_{k\in\mathcal{K}}$, and $(\eta_1,\eta_2,\kappa_{\mathrm{sh}},\kappa_{\mathrm{ex}},\varepsilon_{\mathrm{pb}})$.
\FOR{$m=0,1,\ldots$}
    \STATE Form \eqref{eq:Ppb_cvx} using \eqref{eq:qk_taylor} and \eqref{eq:rate_lower_bound}, and solve it to obtain $(\widehat{\mathbf{p}},\widehat{\mathbf{b}})$.
    \STATE $\Delta_{\mathrm{pred}}^{(m)}\leftarrow \widehat\Xi^{(m)}(\widehat{\mathbf{p}},\widehat{\mathbf{b}})-\widehat\Xi^{(m)}(\mathbf{p}^{(m)},\mathbf{b}^{(m)})$.
    \IF{$\Delta_{\mathrm{pred}}^{(m)}\le 0$}
        \STATE \textbf{break}
    \ENDIF
    \STATE Compute the trust-region ratio $\eta^{(m)}$ from \eqref{eq:trust_ratio}.
    \IF{$(\widehat{\mathbf{p}},\widehat{\mathbf{b}})$ is feasible for problem~\eqref{eq:P1} and $\eta^{(m)}\ge\eta_1$}
        \STATE Accept $(\mathbf{p}^{(m+1)},\mathbf{b}^{(m+1)})\leftarrow(\widehat{\mathbf{p}},\widehat{\mathbf{b}})$.
        \IF{$\eta^{(m)}\ge\eta_2$}
            \STATE $\Delta_k^{(m+1)}\leftarrow \kappa_{\mathrm{ex}}\Delta_k^{(m)}$, $\forall k\in\mathcal{K}$.
        \ELSE
            \STATE $\Delta_k^{(m+1)}\leftarrow \Delta_k^{(m)}$, $\forall k\in\mathcal{K}$.
        \ENDIF
        \IF{$\dfrac{\Xi(\mathbf{w},\mathbf{p}^{(m+1)},\mathbf{b}^{(m+1)},\boldsymbol{\delta})-\Xi(\mathbf{w},\mathbf{p}^{(m)},\mathbf{b}^{(m)},\boldsymbol{\delta})}{\max\{1,|\Xi(\mathbf{w},\mathbf{p}^{(m)},\mathbf{b}^{(m)},\boldsymbol{\delta})|\}}<\varepsilon_{\mathrm{pb}}$}
            \STATE \textbf{break}
        \ENDIF
    \ELSE
        \STATE Reject the candidate and set $(\mathbf{p}^{(m+1)},\mathbf{b}^{(m+1)})\leftarrow(\mathbf{p}^{(m)},\mathbf{b}^{(m)})$.
        \STATE $\Delta_k^{(m+1)}\leftarrow \kappa_{\mathrm{sh}}\Delta_k^{(m)}$, $\forall k\in\mathcal{K}$.
    \ENDIF
\ENDFOR
\STATE \textbf{Output:} last accepted $(\mathbf{p},\mathbf{b})$.
\end{algorithmic}
\end{algorithm}

After solving the surrogate problem \eqref{eq:Ppb_cvx} and obtaining a candidate $(\widehat{\mathbf{p}},\widehat{\mathbf{b}})$, we compare the predicted improvement of the surrogate objective with the actual improvement of the original objective. It is quantified by the trust-region ratio
\begin{equation}
\label{eq:trust_ratio}
\eta^{(m)}
\triangleq
\frac{\Xi(\mathbf{w},\widehat{\mathbf{p}},\widehat{\mathbf{b}},\boldsymbol{\delta})-\Xi(\mathbf{w},\mathbf{p}^{(m)},\mathbf{b}^{(m)},\boldsymbol{\delta})}{\widehat\Xi^{(m)}(\widehat{\mathbf{p}},\widehat{\mathbf{b}})-\widehat\Xi^{(m)}(\mathbf{p}^{(m)},\mathbf{b}^{(m)})},
\end{equation}
whenever the denominator is positive. It is then used to decide whether the candidate step should be accepted and whether the trust-region radii should be expanded or contracted. 
The power-bandwidth update thus alternates between solving the convex surrogate problem and validating the resulting candidate against the original objective. Algorithm~\ref{alg:pb_update} summarises the complete optimisation procedure of sub-problem \eqref{eq:Ppb_obj}. 
The initial point $(\mathbf{p}^{(0)}, \mathbf{b}^{(0)})$ for Algorithm \ref{alg:pb_update} is set to the equal-allocation point $p^{(0)}_k=P^{\max}/K$, $b^{(0)}_k=B^{\max}/K$ at the first outer iteration, and to the last accepted continuous solution $(\mathbf{p}^{n},\mathbf{b}^{n})$ from the previous outer iteration thereafter.

\subsection{User Pairing Sub-problem}
\label{subsec:pairing_solution}
We finally update the user pairing under fixed $(\mathbf{p},\mathbf{b},\boldsymbol{\delta})$. Thus, all pair-level rates, latency terms, energy costs, and distortion costs become constants. The remaining difficulty is therefore no longer nonlinear resource coupling, but the combinatorial one-to-one assignment structure together with the globally coupled user-exclusivity, energy, and distortion constraints.
For each group $k$, we first screen the static edge set $\mathcal{E}_{\mathrm f}$ using the current continuous variables and retain only those pairs that satisfy the current distortion lower bound and latency-induced rate floors on that group. The resulting dynamic feasible edge set is defined as
\begin{equation}
\label{eq:dynamic_feasible_edge}
\mathcal{E}_{\mathrm f}^k(\mathbf{p},\mathbf{b},\boldsymbol{\delta})
=
\left\{(i,j)\in\mathcal{E}_{\mathrm f}\,\middle|\,
\begin{aligned}
&\delta_k\ge \underline\delta_{ij}^{\mathrm D},\\
&R_{ij,i}^k(p_k,b_k,\delta_k)\ge \frac{Q_i\delta_k}{\bar T_{ij,i}},\\
&R_{ij,j}^k(p_k,b_k,\delta_k)\ge \frac{Q_j\delta_k}{\bar T_{ij,j}}
\end{aligned}
\right\}.
\end{equation}
Then, the user pairing sub-problem is reformulated as
\begin{subequations}
\label{eq:Pw}
\begin{align}
\max_{\mathbf{w}}\quad
& \sum_{k=1}^{K}\sum_{(i,j)\in\mathcal{E}_{\mathrm f}} w_{ij}^k\bar R_{ij}^k(p_k,b_k,\delta_k)
\label{eq:Pw_obj} \tag{\theequation},
\\
\text{s.t.}\quad
& \sum_{(i,j)\in\mathcal{E}_{\mathrm f}} w_{ij}^k=1,\quad \forall k\in\mathcal{K},
\label{eq:Pw_group}
\\
& \sum_{k=1}^{K}\sum_{(i,j)\in\mathcal{E}_{\mathrm f}:u\in\{i,j\}} w_{ij}^k\le 1,\quad \forall u\in\mathcal{N},
\label{eq:Pw_user}
\\
& \sum_{k=1}^{K}\sum_{(i,j)\in\mathcal{E}_{\mathrm f}} w_{ij}^kE_{ij}^k(p_k,b_k,\delta_k)\le E^{\max},
\label{eq:Pw_energy}
\\
& \sum_{k=1}^{K}\sum_{(i,j)\in\mathcal{E}_{\mathrm f}:u\in\{i,j\}} w_{ij}^k\widehat D_{u,ij}(\delta_k)\le D_u^{\max}, \; \forall u\in\mathcal{N},
\label{eq:Pw_distortion}
\\
& w_{ij}^k=0,\quad \forall (i,j)\in\mathcal{E}_{\mathrm f}\setminus \mathcal{E}_{\mathrm f}^k(\mathbf{p},\mathbf{b},\boldsymbol{\delta}),\ \forall k\in\mathcal{K},
\label{eq:Pw_dynamic_zero}
\\
& w_{ij}^k\in\{0,1\},\quad \forall (i,j)\in\mathcal{E}_{\mathrm f}, \forall k\in\mathcal{K}.
\label{eq:Pw_binary}
\end{align}
\end{subequations}

Problem \eqref{eq:Pw} is still difficult because, although the objective is now linear in the binary variables, the constraints in \eqref{eq:Pw_user}–\eqref{eq:Pw_distortion} couple the group-wise assignment decisions globally. To decouple these interactions, we relax the binary constraints and dualise the user-exclusivity, total-energy, and per-user distortion constraints. This converts the original pairing problem into a dual pricing problem in which each candidate edge is evaluated according to its net utility under the current prices. Thus, we relax \eqref{eq:Pw_binary} to $w_{ij}^k\in[0,1]$ and dualise \eqref{eq:Pw_user}, \eqref{eq:Pw_energy}, and \eqref{eq:Pw_distortion}. Let $\nu_u\ge 0$, $\vartheta\ge 0$, and $\lambda_u^{\mathrm D}\ge 0$ be the corresponding multipliers, and $\boldsymbol{\nu}=[\nu_1,\cdots,\nu_N]$, $\boldsymbol{\lambda}^{\mathrm D}=[{\lambda}^{\mathrm D}_1,\cdots,{\lambda}^{\mathrm D}_N]$. We define the dual price as $\pi_{ij}^k(p_k,b_k,\delta_k)=\bar R_{ij}^k(p_k,b_k,\delta_k)
-\vartheta E_{ij}^k(p_k,b_k,\delta_k)
-\lambda_i^{\mathrm D}\widehat D_{i,ij}(\delta_k)
-\lambda_j^{\mathrm D}\widehat D_{j,ij}(\delta_k)$. 
Under the dual prices, each candidate pair-group assignment is assigned a priced utility consisting of its pair sum-rate minus the penalties induced by the global energy budget and the users’ distortion budgets.
The pair-group utility is defined as
\begin{equation}
\label{eq:pair_utility}
\Phi_{ij}^k
=
\begin{cases}
\pi_{ij}^k(p_k,b_k,\delta_k),
&\text{if }(i,j)\in\mathcal{E}_{\mathrm f}^k(\mathbf{p},\mathbf{b},\boldsymbol{\delta}),\\
-\infty, &\text{otherwise.}
\end{cases}
\end{equation}
Accordingly, the net dual utility of assigning pair $(i,j)$ to group $k$ is
\begin{equation}
\label{eq:reduced_cost}
\varpi_{ij}^k=\Phi_{ij}^k-\nu_i-\nu_j.
\end{equation}
Substituting these priced utilities into the relaxed problem yields the Lagrangian
\begin{align}
\label{eq:pair_lagrangian}
\mathcal{L}_{\mathrm w}
=  \sum_{k=1}^{K}\sum_{(i,j)\in\mathcal{E}_{\mathrm f}}
 w_{ij}^k\varpi_{ij}^k
+\sum_{u\in\mathcal{N}} (\nu_u + \lambda_u^{\mathrm D}D_u^{\max})
+\vartheta E^{\max},
\end{align}
in which the remaining inter-group interaction is carried only through the dual prices.
The following lemma shows that, once the coupled constraints are priced, the relaxed pairing update for each group reduces to selecting the dynamically feasible edge with the largest reduced cost.
\begin{lemma}
\label{lem:pair_relaxation}
For fixed multipliers $(\boldsymbol{\nu},\vartheta,\boldsymbol{\lambda}^{\mathrm D})$, the relaxed pairing sub-problem of each group $k$ admits the one-hot solution
\begin{equation}
\label{eq:relaxed_pairing_solution}
w_{ij}^{k,\star}=
\begin{cases}
1, & \text{if }(i,j)=\arg\max\limits_{(m,n)\in\mathcal{E}_{\mathrm f}^k(\mathbf{p},\mathbf{b},\boldsymbol{\delta})}\varpi_{mn}^k,\\
0, & \text{otherwise.}
\end{cases}
\end{equation}
\end{lemma}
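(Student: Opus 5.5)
The plan is to exploit the structure of the Lagrangian \eqref{eq:pair_lagrangian}. Once the multipliers $(\boldsymbol{\nu},\vartheta,\boldsymbol{\lambda}^{\mathrm D})$ are fixed, the terms $\sum_{u}(\nu_u+\lambda_u^{\mathrm D}D_u^{\max})+\vartheta E^{\max}$ are constants. The remaining part $\sum_k\sum_{(i,j)} w_{ij}^k\varpi_{ij}^k$ is linear in $\mathbf{w}$ and separable across groups. The constraints that were not dualised are the per-group assignment constraint \eqref{eq:Pw_group}, the dynamic zero constraint \eqref{eq:Pw_dynamic_zero}, and the relaxed box constraint $w_{ij}^k\in[0,1]$. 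Each of these involves only the variables $\{w_{ij}^k\}_{(i,j)}$ of a single group $k$. The Lagrangian maximisation therefore splits into $K$ independent problems:
\[
\max_{\{w_{ij}^k\}}\ \sum_{(i,j)\in\mathcal{E}_{\mathrm f}^k(\mathbf{p},\mathbf{b},\boldsymbol{\delta})} w_{ij}^k\varpi_{ij}^k
\quad\text{s.t.}\quad \sum_{(i,j)\in\mathcal{E}_{\mathrm f}^k(\mathbf{p},\mathbf{b},\boldsymbol{\delta})} w_{ij}^k=1,\ \ w_{ij}^k\ge 0.
\]
The box upper bound is implied by the simplex constraint. The restriction to $\mathcal{E}_{\mathrm f}^k$ comes from \eqref{eq:Pw_dynamic_zero}, or equivalently from $\Phi_{ij}^k=-\infty$ in \eqref{eq:pair_utility}, since any positive weight on an infeasible edge gives value $-\infty$.

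I would then observe that each per-group problem is a linear program over the probability simplex on $\mathcal{E}_{\mathrm f}^k$. For any feasible $\mathbf{w}^k$ we have $\sum w_{ij}^k\varpi_{ij}^k\le \big(\max_{(m,n)}\varpi_{mn}^k\big)\sum w_{ij}^k=\max_{(m,n)}\varpi_{mn}^k$, because all weights are nonnegative. The one-hot vector in \eqref{eq:relaxed_pairing_solution} attains this bound, so it is optimal. Equivalently, the simplex's vertices are exactly the one-hot vectors, and a linear objective attains its maximum at a vertex. I would also note that if there are ties in the $\arg\max$, any fixed tie-breaking rule still yields an optimal one-hot solution. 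Because the optimal point is integral, this recovers binarity even though \eqref{eq:Pw_binary} was relaxed.

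The main obstacle is a well-posedness point rather than a computation. The argument requires $\mathcal{E}_{\mathrm f}^k(\mathbf{p},\mathbf{b},\boldsymbol{\delta})$ to be nonempty, since otherwise the group-$k$ problem is infeasible. I would handle this by assuming, as the algorithm implicitly does, that the current continuous iterate admits at least one dynamically feasible edge for each group; otherwise the sub-problem is declared infeasible. Since that set is finite and nonempty, the $\arg\max$ exists. The proof would end by remarking that user-exclusivity across groups is not guaranteed here; it is enforced only through the prices $\nu_u$ in the subsequent dual update.
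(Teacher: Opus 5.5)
Your proof is correct and follows the paper's argument. In both, the group-$k$ problem is a linear objective over the simplex on the dynamically feasible edges, so its maximum is attained at the one-hot vertex with the largest reduced cost. Your explicit bound $\sum_{(i,j)} w_{ij}^k\varpi_{ij}^k\le\max_{(m,n)}\varpi_{mn}^k$, the separability check, and the remarks on ties and on nonemptiness of $\mathcal{E}_{\mathrm f}^k(\mathbf{p},\mathbf{b},\boldsymbol{\delta})$ are useful refinements that the paper leaves implicit.
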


\begin{proof}
For fixed $k$, the relaxed constraints reduce to $\sum_{(i,j)\in\mathcal{E}_{\mathrm f}}w_{ij}^k=1$ and $0\le w_{ij}^k\le 1$, so the feasible set is a simplex. The group-wise objective extracted from \eqref{eq:pair_lagrangian} is linear in $\{w_{ij}^k\}_{(i,j)\in\mathcal{E}_{\mathrm f}}$, and only dynamically feasible edges have finite reduced costs. Hence the optimum is attained at an extreme point of the simplex, namely the one-hot vector associated with the largest reduced cost among the dynamically feasible edges.
\end{proof}

The one-hot rule in \eqref{eq:relaxed_pairing_solution} solves the group-wise dual-relaxed selection under the current prices. The prices are then adjusted by projected sub-gradient steps so as to penalise user conflicts, excessive total energy, and excessive per-user distortion in subsequent iterations.
They are updated by
\begin{subequations}
\begin{align}
  \nu_u^{(t+1)}
    &= \left[
         \nu_u^{(t)}
         + \mu^{(t)}\!\left(
             \sum_{k=1}^{K}\sum_{(i,j)\in\mathcal{E}_{\mathrm{f}}:u\in\{i,j\}}
             w_{ij}^{k,(t)}-1
           \right)
       \right]^{+}\!, \label{eq:pair_dual_u} \\[3pt]
  \vartheta^{(t+1)}
    &= \left[\vartheta^{(t)}\right. \notag \\
    + \left.\beta_E^{(t)}\right. & \left.\left(
             \sum_{k=1}^{K}\sum_{(i,j)\in\mathcal{E}_{\mathrm{f}}}
             w_{ij}^{k,(t)}E_{ij}^k(p_k,b_k,\delta_k)-E^{\max}
           \right)\right]^{+}\!, \label{eq:pair_dual_energy} \\[3pt]
  \lambda_u^{\mathrm{D},(t+1)}
    &= \Bigg[\lambda_u^{\mathrm{D},(t)}\Bigg. \notag \\
    + \Bigg.\beta_D^{(t)} & \Bigg(
           \sum_{k=1}^{K}\sum_{(i,j)\in\mathcal{E}_{\mathrm{f}}:u\in\{i,j\}}
             w_{ij}^{k,(t)}\widehat{D}_{u,ij}(\delta_k)-D_u^{\max}
           \Bigg)\Bigg]^{+}\!. \label{eq:pair_dual_distortion}
\end{align}
\end{subequations}
Here, $\mu^{(t)}$, $\beta_E^{(t)}$, and $\beta_D^{(t)}$ are positive diminishing step sizes. 
More specifically, $\nu_u$ prices repeated use of user $u$, $\vartheta$ prices the total energy budget, and $\lambda^{\mathrm D}_u$ prices the distortion budget of user $u$.
Although the dual iterations drive the reduced costs toward respecting the coupled constraints on average, the resulting one-hot selections may still violate primal user exclusivity because different groups make their selections independently. We therefore perform a two-stage procedure after the dual loop, consisting of conflict resolution followed by residual completion.

\emph{Stage 1:} 
The first stage removes user conflicts greedily while preserving the most valuable incident assignment for each conflicted user.
Each user keeps only the incident assignment with the largest reduced cost, and all conflicting assignments are removed. Let $\mathcal{U}_{\mathrm{free}}$ and $\mathcal{K}_{\mathrm{free}}$ denote the unmatched users and unresolved groups after this step. Define the residual edge sets
$\mathcal{E}_{\mathrm f}(\mathcal{U}_{\mathrm{free}})
\triangleq
\left\{(i,j)\in\mathcal{E}_{\mathrm f}\,\big|\, i,j\in\mathcal{U}_{\mathrm{free}}\right\}$
and $\mathcal{E}_{\mathrm f}^k(\mathcal{U}_{\mathrm{free}})
\triangleq
\left\{(i,j)\in\mathcal{E}_{\mathrm f}^k(\mathbf{p},\mathbf{b},\boldsymbol{\delta})\,\big|\, i,j\in\mathcal{U}_{\mathrm{free}}\right\}$.

\emph{Stage 2:} 
After conflict resolution, the remaining unmatched users and unresolved groups are completed through a residual matching-and-assignment step, which restores the one-to-one pairing structure on the residual graph.
Over the residual graph induced by $\mathcal{U}_{\mathrm{free}}$, define $\bar\varpi_{ij}=\max_{k\in\mathcal{K}_{\mathrm{free}}}\varpi_{ij}^k$. We first solve the residual maximum-weight perfect matching
\begin{subequations}
\label{eq:Pmatch}
\begin{align}
\max_{\{\chi_{ij}\}}\quad
& \sum_{(i,j)\in\mathcal{E}_{\mathrm f}(\mathcal{U}_{\mathrm{free}})}\bar\varpi_{ij}\chi_{ij}
\label{eq:Pmatch_obj} \tag{\theequation},
\\
\text{s.t.}\quad
& \sum_{(i,j)\in\mathcal{E}_{\mathrm f}(\mathcal{U}_{\mathrm{free}}):u\in\{i,j\}}\chi_{ij}=1,\quad \forall u\in\mathcal{U}_{\mathrm{free}},
\label{eq:Pmatch_user}
\\
& \chi_{ij}\in\{0,1\},\quad \forall (i,j)\in\mathcal{E}_{\mathrm f}(\mathcal{U}_{\mathrm{free}}),
\label{eq:Pmatch_binary}
\end{align}
\end{subequations}
and then assign the matched pairs to the unresolved groups through
\begin{subequations}
\label{eq:Passign}
\begin{align}
\max_{\{x_{mk}\}}\quad
& \sum_{m=1}^{|\mathcal{K}_{\mathrm{free}}|}\sum_{k\in\mathcal{K}_{\mathrm{free}}}x_{mk}\varpi_{i_mj_m}^k
\label{eq:Passign_obj} \tag{\theequation},
\\
\text{s.t.}\quad
& \sum_{k\in\mathcal{K}_{\mathrm{free}}}x_{mk}=1,\quad \forall m,
\label{eq:Passign_pair}
\\
& \sum_{m=1}^{|\mathcal{K}_{\mathrm{free}}|}x_{mk}=1,\quad \forall k\in\mathcal{K}_{\mathrm{free}},
\label{eq:Passign_group}
\\
& x_{mk}\in\{0,1\},\quad \forall m,\ \forall k\in\mathcal{K}_{\mathrm{free}}.
\label{eq:Passign_binary}
\end{align}
\end{subequations}
If either the residual matching or the residual assignment is infeasible, we fall back to the compact residual repair problem
\begin{subequations}
\label{eq:Prepair}
\begin{align}
\max_{\{\omega_{ij}^k\}}
& \sum_{k\in\mathcal{K}_{\mathrm{free}}}\sum_{(i,j)\in\mathcal{E}_{\mathrm f}^k(\mathcal{U}_{\mathrm{free}})} \omega_{ij}^k\varpi_{ij}^k
\label{eq:Prepair_obj} \tag{\theequation},
\\
\text{s.t.}\;
& \sum_{(i,j)\in\mathcal{E}_{\mathrm f}^k(\mathcal{U}_{\mathrm{free}})} \omega_{ij}^k=1,\quad \forall k\in\mathcal{K}_{\mathrm{free}},
\label{eq:Prepair_group}
\\
& \sum_{k\in\mathcal{K}_{\mathrm{free}}}\sum_{(i,j)\in\mathcal{E}_{\mathrm f}^k(\mathcal{U}_{\mathrm{free}}):u\in\{i,j\}} \omega_{ij}^k=1, \forall u\in\mathcal{U}_{\mathrm{free}},
\label{eq:Prepair_user}
\\
& \omega_{ij}^k\in\{0,1\}, \forall k\in\mathcal{K}_{\mathrm{free}},\ \forall (i,j)\in\mathcal{E}_{\mathrm f}^k(\mathcal{U}_{\mathrm{free}}).
\label{eq:Prepair_binary}
\end{align}
\end{subequations}
Finally, a pair-swap local refinement is executed: if swapping the assigned pairs of two groups preserves feasibility and increases the total pairing utility, the swap is accepted; the process terminates when no improving swap exists.
The validity of the above repair step is stated in Proposition~\ref{prop:projection_feasible}.

\begin{algorithm}[t]
\caption{User Pairing Optimisation}
\label{alg:pairing_update}
\begin{algorithmic}[1]
\STATE \textbf{Input:} current $(\mathbf{p},\mathbf{b},\boldsymbol{\delta})$, feasible edge set $\mathcal{E}_{\mathrm f}$, and dual variables $(\boldsymbol{\nu},\vartheta,\boldsymbol{\lambda}^{\mathrm D})$.
\REPEAT
    \STATE Construct $\{\mathcal{E}_{\mathrm f}^k(\mathbf{p},\mathbf{b},\boldsymbol{\delta})\}_{k\in\mathcal{K}}$ via \eqref{eq:dynamic_feasible_edge}.
    \STATE Compute $\Phi_{ij}^k$ and $\varpi_{ij}^k$ from \eqref{eq:pair_utility}--\eqref{eq:reduced_cost}.
    \STATE For each group $k$, choose the relaxed one-hot assignment by \eqref{eq:relaxed_pairing_solution}.
    \STATE Update $(\boldsymbol{\nu},\vartheta,\boldsymbol{\lambda}^{\mathrm D})$ via \eqref{eq:pair_dual_u}--\eqref{eq:pair_dual_distortion}.
\UNTIL{the dual stopping criterion is satisfied}
\STATE Resolve user conflicts by keeping, for each user, the incident assignment with the largest reduced cost.
\STATE On the residual subset, solve \eqref{eq:Pmatch} and \eqref{eq:Passign} when feasible; otherwise solve \eqref{eq:Prepair}.
\STATE Apply feasible pair swaps until no improving swap exists.
\STATE \textbf{Output:} feasible pairing $\widehat{\mathbf{w}}$ and updated dual variables $(\boldsymbol{\nu},\vartheta,\boldsymbol{\lambda}^{\mathrm D})$.
\end{algorithmic}
\end{algorithm}

\begin{proposition}
\label{prop:projection_feasible}
The repair procedure composed of conflict resolution, residual matching-assignment when feasible, and the fallback repair problem~\eqref{eq:Prepair} otherwise, returns a binary one-to-one pairing satisfying \eqref{eq:Pw_group}, \eqref{eq:Pw_user}, and \eqref{eq:Pw_binary} whenever the residual sub-problem over $(\mathcal{U}_{\mathrm{free}},\mathcal{K}_{\mathrm{free}})$ is feasible.
\end{proposition}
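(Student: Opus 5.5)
We prove Proposition~\ref{prop:projection_feasible} by tracking the partial assignment through the stages of the repair procedure. We show that at every stage it stays binary, gives each resolved group exactly one pair, and uses each user at most once. Stage~1 starts from the one-hot outputs of Lemma~\ref{lem:pair_relaxation}, which are binary and satisfy \eqref{eq:Pw_group} group by group. It only deletes assignments. After it, every user $u$ keeps at most one incident assignment: the one with the largest reduced cost. Deleting an assignment frees its group into $\mathcal{K}_{\mathrm{free}}$ and both of its endpoints into $\mathcal{U}_{\mathrm{free}}$. Groups with no deleted assignment keep exactly one pair. We therefore get a binary partial pairing $\mathbf{w}^{\mathrm{fix}}$ on $\mathcal{K}\setminus\mathcal{K}_{\mathrm{free}}$ that satisfies \eqref{eq:Pw_user}. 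Its users are disjoint from $\mathcal{U}_{\mathrm{free}}$. Counting, the $K-|\mathcal{K}_{\mathrm{free}}|$ fixed groups cover $2(K-|\mathcal{K}_{\mathrm{free}}|)$ distinct users, so $|\mathcal{U}_{\mathrm{free}}| = 2|\mathcal{K}_{\mathrm{free}}|$.

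Next we handle the residual stage in two cases. If \eqref{eq:Pmatch} is feasible, its constraint \eqref{eq:Pmatch_user} gives a perfect matching of $\mathcal{U}_{\mathrm{free}}$. This yields exactly $|\mathcal{U}_{\mathrm{free}}|/2=|\mathcal{K}_{\mathrm{free}}|$ disjoint pairs. Problem \eqref{eq:Passign} is then a square assignment problem. When it is feasible, constraints \eqref{eq:Passign_pair}--\eqref{eq:Passign_group} make $x$ a permutation matrix. Each free group therefore receives exactly one matched pair, and each pair goes to exactly one group. Setting $w_{i_mj_m}^k=x_{mk}$ on the residual and combining with $\mathbf{w}^{\mathrm{fix}}$ gives \eqref{eq:Pw_group} for all $k$. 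Users in $\mathcal{U}_{\mathrm{free}}$ appear exactly once and fixed users exactly once, which gives \eqref{eq:Pw_user}. Binarity \eqref{eq:Pw_binary} is inherited from \eqref{eq:Pmatch_binary} and \eqref{eq:Passign_binary}. If either of the two problems is infeasible, the fallback \eqref{eq:Prepair} is used. By the hypothesis that the residual sub-problem is feasible, it has a feasible point. Any such point satisfies \eqref{eq:Prepair_group}, \eqref{eq:Prepair_user}, and \eqref{eq:Prepair_binary}, which map directly onto \eqref{eq:Pw_group}, \eqref{eq:Pw_user}, and \eqref{eq:Pw_binary} over the free groups and users. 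Combining with $\mathbf{w}^{\mathrm{fix}}$ then works exactly as in the first case. In both cases the residual edges lie in $\mathcal{E}_{\mathrm f}^k$ or $\mathcal{E}_{\mathrm f}$, so each $w_{ij}^k$ is defined on the admissible index set.

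Finally, the pair-swap refinement exchanges the pairs of two groups. This permutes pairs among groups and preserves both the one-pair-per-group and the one-group-per-user structure. A swap is accepted only if it keeps feasibility, so the properties survive until the procedure terminates.

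The main obstacle is the wording "whenever the residual sub-problem is feasible". We interpret this as feasibility of \eqref{eq:Prepair}, equivalently of the matching--assignment pair. We will point out that feasibility of \eqref{eq:Pmatch}, which uses $\mathcal{E}_{\mathrm f}(\mathcal{U}_{\mathrm{free}})$, does not by itself imply compatibility with the group-specific sets $\mathcal{E}_{\mathrm f}^k$; this is exactly why the fallback is needed. The argument only needs the counting identity $|\mathcal{U}_{\mathrm{free}}|=2|\mathcal{K}_{\mathrm{free}}|$ together with the disjointness established in Stage~1.
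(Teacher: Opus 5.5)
Your proof is correct and follows the same route as the paper's: Stage~1 leaves a binary, user-exclusive partial pairing, and then either the matching--assignment pair \eqref{eq:Pmatch}--\eqref{eq:Passign} or the fallback \eqref{eq:Prepair} completes it one-to-one over $(\mathcal{U}_{\mathrm{free}},\mathcal{K}_{\mathrm{free}})$. Your counting identity $|\mathcal{U}_{\mathrm{free}}|=2|\mathcal{K}_{\mathrm{free}}|$, which makes \eqref{eq:Passign} square, and your check of the swap step are useful details the paper leaves implicit.

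There are two small slips to fix, neither of which breaks the argument. First, a deleted assignment does not send both of its endpoints to $\mathcal{U}_{\mathrm{free}}$: an endpoint whose preferred assignment survives stays matched. Your disjointness and counting steps need $\mathcal{U}_{\mathrm{free}}$ to be exactly the uncovered users, so state it that way. Second, feasibility of \eqref{eq:Prepair} is not ``equivalent'' to feasibility of the two-step procedure, because the particular max-weight matching returned by \eqref{eq:Pmatch} need not be assignable. Your own later remark says as much, and your case split never uses that equivalence.
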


\begin{proof}
After conflict resolution, each retained assignment is binary and satisfies user exclusivity by construction. If Problems~\eqref{eq:Pmatch} and \eqref{eq:Passign} are feasible, they return disjoint residual pairs and assign them one-to-one to the unresolved groups, thereby satisfying \eqref{eq:Pw_group} and \eqref{eq:Pw_user}. If either of them is infeasible, problem~\eqref{eq:Prepair} directly imposes one pair per unresolved group and one use per unresolved user. Therefore, whenever the residual problem is feasible, combining the accepted assignments from Stage~1 with the Stage~2 or fallback solution yields a binary one-to-one pairing satisfying \eqref{eq:Pw_group}, \eqref{eq:Pw_user}, and \eqref{eq:Pw_binary}.
\end{proof}

The complete optimisation procedure of user pairing sub-problem \eqref{eq:Pw} is summarised in Algorithm~\ref{alg:pairing_update}.

\subsection{Algorithm Analysis}
\label{subsec:algorithm_analysis}

We now integrate Algorithms~\ref{alg:delta_update}--\ref{alg:pairing_update}
into the unified alternating-optimisation framework of Algorithm~\ref{alg:overall}.
Let $\Xi^{(n)} \triangleq \Xi\!\left(\mathbf{w}^{(n)},
\mathbf{p}^{(n)},\mathbf{b}^{(n)},\boldsymbol{\delta}^{(n)}\right)$
denote the accepted objective value at the $n$-th outer iteration.

\subsubsection{Convergence Analysis}

Denote the intermediate objective values after the compression ratio,
power--bandwidth, and pairing updates at iteration $n$ by
$\Xi^{(n)}_{c}$, $\Xi^{(n)}_{pb}$, and $\Xi^{(n)}_{w}$, respectively.
The three block updates satisfy
\begin{equation}
  \Xi^{(n-1)} \leq \Xi^{(n)}_{c} \leq \Xi^{(n)}_{pb}
              \leq \Xi^{(n)}_{w} = \Xi^{(n)},
  \label{eq:monotone_chain}
\end{equation}
establishing that $\{\Xi^{(n)}\}$ is monotonically non-decreasing.
Each inequality follows from the optimality of the respective block
update: Algorithm~\ref{alg:delta_update} maximises the Lagrangian
over $\boldsymbol{\delta}$; Algorithm~\ref{alg:pb_update} accepts
a candidate only when the trust-region ratio criterion~\eqref{eq:trust_ratio}
confirms a non-negative actual improvement; and the pairing candidate
is retained only after re-optimising the continuous variables and
verifying $\Xi^{(n)}_{w} \geq \Xi^{(n)}_{pb}$.
Since the sum rate is non-negative and the feasible set is compact
by constraints~\eqref{eq:P1_power}--\eqref{eq:P1_delta},
the sequence $\{\Xi^{(n)}\}$ is upper bounded and therefore converges.

\subsubsection{Complexity Analysis}

The per-iteration costs of the three sub-problems are:
$\mathcal{O}(I_{\lambda} K I_{\delta})$ for the compression ratio
update (Algorithm~\ref{alg:delta_update}, $I_\lambda$ dual iterations
each solving $K$ Newton searches);
$\mathcal{O}(I_{\mathrm{pb}} K^{3})$ for the power--bandwidth update
(Algorithm~\ref{alg:pb_update}, interior-point method over $2K$ variables);
and $\mathcal{O}(I_{\mathrm{w}} K|\mathcal{E}_{\mathrm{f}}|
+ |\mathcal{U}_{\mathrm{free}}|^{3} + K^{2})$ for the pairing update
(Algorithm~\ref{alg:pairing_update}, comprising dual iterations,
Blossom-type matching, and pair-swap search).
Since Algorithms~\ref{alg:delta_update} and~\ref{alg:pb_update}
run twice per outer iteration, the total complexity over
$I_{\mathrm{AO}}$ iterations is
\begin{equation}
  \mathcal{O}\!\left(I_{\mathrm{AO}}\!\left[
    I_{\lambda} K I_{\delta} + I_{\mathrm{pb}} K^{3}
    + I_{\mathrm{w}} K\lvert\mathcal{E}_{\mathrm{f}}\rvert
    + \lvert\mathcal{U}_{\mathrm{free}}\rvert^{3} + K^{2}
  \right]\right),
  \label{eq:total_complexity}
\end{equation}
where the $\mathcal{O}(I_{\mathrm{pb}}K^3)$ interior-point solve
dominates in practice, as offline pruning~\eqref{eq:dynamic_feasible_edge}
renders $|\mathcal{E}_{\mathrm{f}}|$ and $|\mathcal{U}_{\mathrm{free}}|$
small.

\begin{algorithm}[t]
\caption{Overall Optimisation}
\label{alg:overall}
\begin{algorithmic}[1]
\STATE \textbf{Offline phase:} profile $\{\widehat D_{u,ij}(\cdot)\}$ and $\{\widehat\rho_{ij}(\bar p_n,\bar\delta_\ell)\}$ on $\mathcal{G}_p\times\mathcal{D}$, fit the parameters in \eqref{eq:rho_power}, and construct $\mathcal{E}_{\mathrm f}$ via \eqref{eq:feasible_edge}.
\STATE Initialise a feasible tuple $(\mathbf{w}^{(0)},\mathbf{p}^{(0)},\mathbf{b}^{(0)},\boldsymbol{\delta}^{(0)})$ and pairing dual variables $(\boldsymbol{\nu}^{(0)},\vartheta^{(0)},\boldsymbol{\lambda}^{\mathrm D,(0)})$.
\FOR{$n=0,1,\ldots$}
    \STATE Run Algorithm~\ref{alg:delta_update} under $(\mathbf{w}^{(n)},\mathbf{p}^{(n)},\mathbf{b}^{(n)})$ to obtain $\boldsymbol{\delta}^{\mathrm c}$.
    \STATE Run Algorithm~\ref{alg:pb_update} under $(\mathbf{w}^{(n)},\boldsymbol{\delta}^{\mathrm c})$, initialised at $(\mathbf{p}^{(n)},\mathbf{b}^{(n)})$, and obtain $(\mathbf{p}^{\mathrm c},\mathbf{b}^{\mathrm c})$.
    \STATE Set $\Xi_{\mathrm c}^{(n+1)}=\Xi(\mathbf{w}^{(n)},\mathbf{p}^{\mathrm c},\mathbf{b}^{\mathrm c},\boldsymbol{\delta}^{\mathrm c})$.
    \STATE Run Algorithm~\ref{alg:pairing_update} with $(\mathbf{p}^{\mathrm c},\mathbf{b}^{\mathrm c},\boldsymbol{\delta}^{\mathrm c},\boldsymbol{\nu}^{(n)},\vartheta^{(n)},\boldsymbol{\lambda}^{\mathrm D,(n)})$ to obtain $(\widehat{\mathbf{w}}^{(n+1)},\boldsymbol{\nu}^{(n+1)},\vartheta^{(n+1)},\boldsymbol{\lambda}^{\mathrm D,(n+1)})$.
    \STATE Re-run Algorithm~\ref{alg:delta_update} and Algorithm~\ref{alg:pb_update} under $\widehat{\mathbf{w}}^{(n+1)}$, initialised at $(\mathbf{p}^{\mathrm c},\mathbf{b}^{\mathrm c})$; if feasible, obtain $(\widehat{\boldsymbol{\delta}},\widehat{\mathbf{p}},\widehat{\mathbf{b}})$ and set $\widehat\Xi^{(n+1)}=\Xi(\widehat{\mathbf{w}}^{(n+1)},\widehat{\mathbf{p}},\widehat{\mathbf{b}},\widehat{\boldsymbol{\delta}})$.
    \IF{the refined candidate is feasible for problem~\eqref{eq:P1} and $\widehat\Xi^{(n+1)}\ge \Xi_{\mathrm c}^{(n+1)}$}
        \STATE Accept $(\mathbf{w}^{(n+1)},\mathbf{p}^{(n+1)},\mathbf{b}^{(n+1)},\boldsymbol{\delta}^{(n+1)},\Xi^{(n+1)})\leftarrow(\widehat{\mathbf{w}}^{(n+1)},\widehat{\mathbf{p}},\widehat{\mathbf{b}},\widehat{\boldsymbol{\delta}},\widehat\Xi^{(n+1)})$.
    \ELSE
        \STATE Keep $(\mathbf{w}^{(n+1)},\mathbf{p}^{(n+1)},\mathbf{b}^{(n+1)},\boldsymbol{\delta}^{(n+1)},\Xi^{(n+1)})\leftarrow(\mathbf{w}^{(n)},\mathbf{p}^{\mathrm c},\mathbf{b}^{\mathrm c},\boldsymbol{\delta}^{\mathrm c},\Xi_{\mathrm c}^{(n+1)})$.
    \ENDIF
    \IF{$|\Xi^{(n+1)}-\Xi^{(n)}|<\varepsilon$}
        \STATE \textbf{break}
    \ENDIF
\ENDFOR
\STATE \textbf{Output:} final feasible tuple $(\mathbf{w},\mathbf{p},\mathbf{b},\boldsymbol{\delta})$.
\end{algorithmic}
\end{algorithm}

\section{Simulation Results}
\label{sec:results}

In this section, we present simulation results to validate the effectiveness of the proposed SC-SFMA model and the joint user pairing and resource allocation algorithm. 
 
\subsection{Simulation Setup}
\label{subsec:sim_setup}

In the simulations, $N$ users are uniformly distributed within a circular cell of radius 250~m and communicate with the BS. For the path loss model between each user and the BS, we use the model $128.1 + 37.6 \lg d$, where $d$ is in km, and set the standard deviation of shadow fading to $4$~dB. Unless otherwise stated, the default simulation parameters are listed in Table~\ref{tab:params}. All optimisation results are averaged over 100 independent Monte Carlo realisations of user locations and channel coefficients. The SC-SFMA model is trained end-to-end using the Adam optimiser with a learning rate of $10^{-4}$ and a batch size of $32$ for $100$ epochs. The training is conducted over a range of SNR values from $1$ to $13~\text{dB}$ to ensure robust performance across different channel conditions. 
For comparison, the proposed SC-SFMA is benchmarked against the following six schemes.

\begin{table}[t]
\renewcommand\arraystretch{1.2}
\footnotesize
\centering
\caption{Main system parameters}
\begin{tabular}{|c||c|}
    \toprule\hline
    \textbf{Parameter}  & \textbf{Value} \\
    \hline
    Cell radius & $250$~m \\ \hline
    Number of users $N$ & $10$ \\ \hline
    Total transmit power $P^{\max}$ & $30$~dBm \\ \hline
    Total bandwidth $B^{\max}$ & $10$~MHz \\ \hline
    Maximum latency $T^{\max}$ & $100$~ms \\ \hline
    Noise PSD $N_0$ & $-174$~dBm/Hz \\ \hline
    Base embedding dimension $C_1$ & $128$ \\ \hline
    $[N_1,N_2,N_3,N_4]$ & $[2,2,6,2]$ \\ \hline
    Minimum compression ratio $\delta_{\min}$ & $0.0625$ \\ \hline
    Distortion threshold $D_u^{\max}$ & $0.005$ \\ \hline
    BS CPU frequency $f_{\text{BS}}$ & $10$~GHz \\ \hline
    User CPU frequency $f_u$ & $1$~GHz \\ \hline
    Computation energy coefficient $\zeta$ & $5 \times 10^{-3}$~J \\
    \hline\bottomrule
\end{tabular}
\label{tab:params}
\end{table}

\begin{enumerate}
\item \textbf{SwinJSCC+NOMA}:
We employ a SwinJSCC encoder without SCFM module for each user and implement NOMA within each user group. The overall optimisation procedure follows the same procedure as Algorithm \ref{alg:overall}.
 
\item \textbf{SwinJSCC+OMA}:
Each user independently employs SwinJSCC and implements frequency division multiple access (FDMA) over an orthogonal sub-band with $b_u\!=\!B^{\max}\!/N$ and $p_u\!=\!P^{\max}\!/N$, without user pairing.
 
\item \textbf{DeepJSCC-NOMA}:
We employ the DeepJSCC~\cite{bourtsoulatze2019deep} model for each user with NOMA power-domain superposition.
 
\item \textbf{BPG+LDPC+QAM}:
This baseline implements BPG compression \cite{bellard2014bpg}, low-density parity-check (LDPC) channel coding, and quadrature amplitude modulation (QAM), in both OMA and NOMA schemes, and chooses the best-performing configuration of coding rate and modulation based on the adaptive modulation and coding (AMC) standard \cite{3gpp_38214_v1500}.
 
\item \textbf{Channel-based pairing}:
This baseline implements the same SC-SFMA transceiver as the proposed scheme, but users are paired by the classical NOMA criterion~\cite{ding2016impact}. The resource optimisation follows the same procedure as Algorithms~\ref{alg:delta_update} and \ref{alg:pb_update}.
 
\item \textbf{Equal allocation}:
This baseline implements the SC-SFMA transceiver with the proposed pairing algorithm \ref{alg:pairing_update}, but power and bandwidth are equally divided as $p_k\!=\!P^{\max}\!/K$ and $b_k\!=\!B^{\max}\!/K$ for all groups. The compression ratio is optimised via Algorithm~\ref{alg:delta_update}.
\end{enumerate}
 
Schemes~$1)-4)$ compare different transmission models under the same resource budget, isolating the model-level gain of the SCFM module and the advantage of JSCC over separated coding. Schemes~$5)-6)$ share the SC-SFMA transceiver and profiled $\rho_{ij}$ tables with the proposed method, isolating the impact of different pairing and resource allocation strategies.

\subsection{SC-SFMA Model Performance}
\label{subsec:model_perf}

\begin{figure}[t]

    \centering
    
    \begin{subfigure}[t]{0.49\linewidth}
        \centering
        \includegraphics[width=\linewidth]{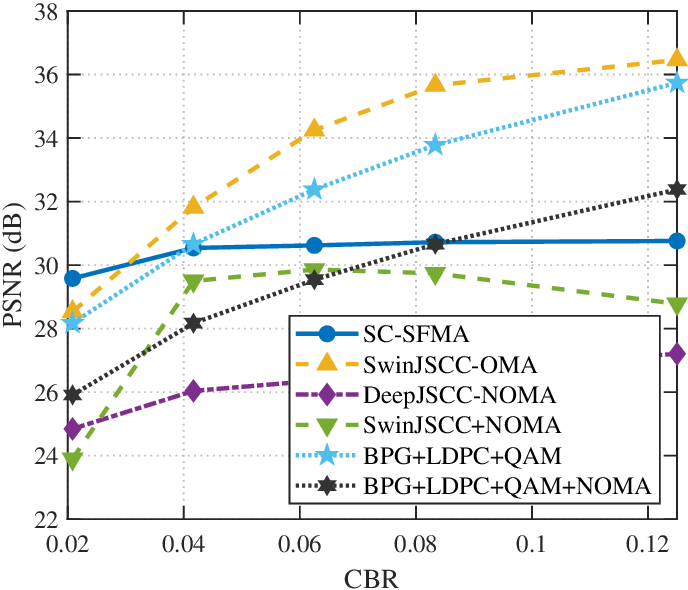}
        \caption{}
        \label{fig:sub1}
    \end{subfigure}
    \hfill
    \begin{subfigure}[t]{0.49\linewidth}
        \centering
        \includegraphics[width=\linewidth]{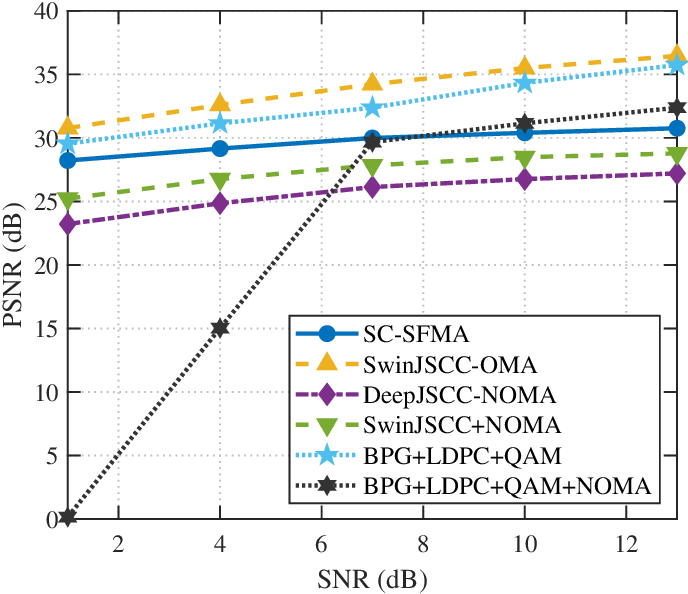}
        \caption{}
        \label{fig:sub2}
    \end{subfigure}
    
    \vspace{0.3cm}
    
    \begin{subfigure}[t]{0.49\linewidth}
        \centering
        \includegraphics[width=\linewidth]{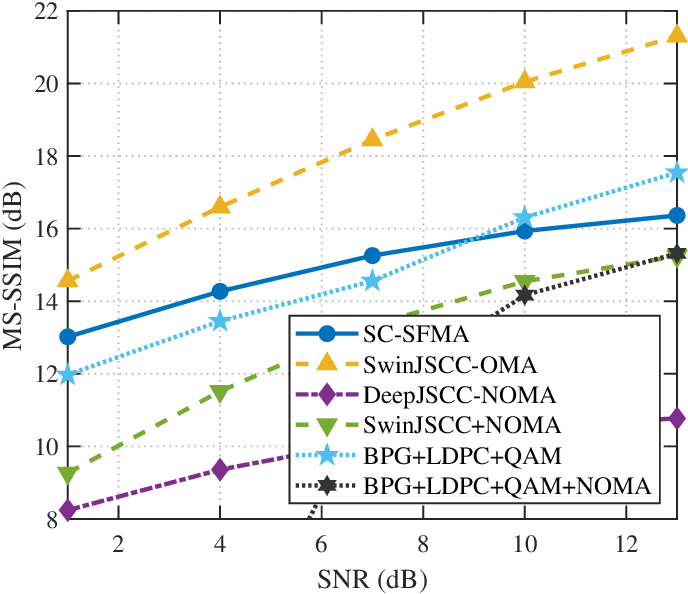}
        \caption{}
        \label{fig:sub3}
    \end{subfigure}
    \hfill
    \begin{subfigure}[t]{0.49\linewidth}
        \centering
        \includegraphics[width=\linewidth]{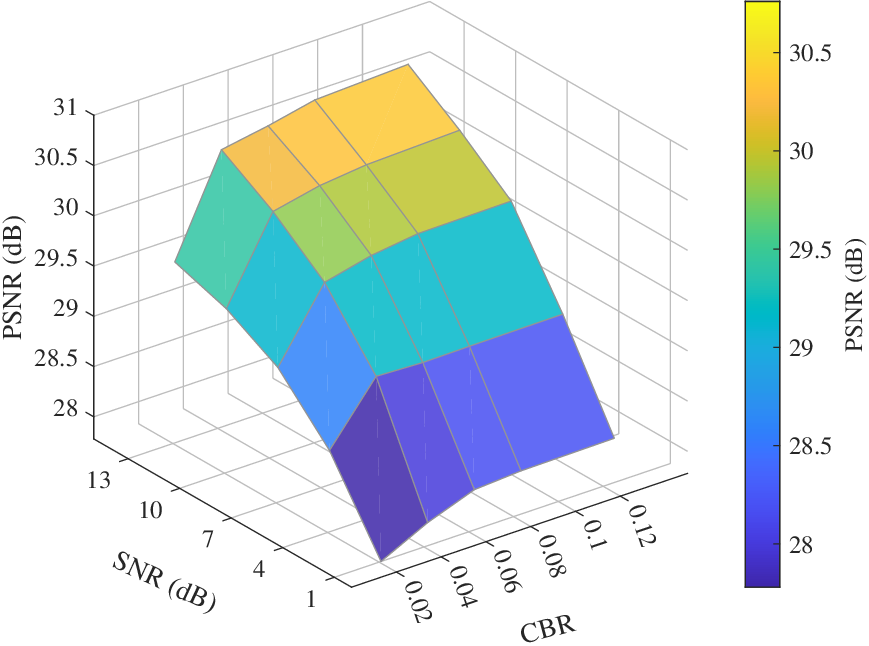}
        \caption{}
        \label{fig:sub4}
    \end{subfigure}
    
    \caption{Performance evaluation under different channel and rate conditions. 
(a) PSNR versus CBR at $\text{SNR}=13~\text{dB}$. 
(b) PSNR versus SNR at $\text{CBR}=0.125$. 
(c) MS-SSIM versus SNR at $\text{CBR}=0.125$. 
(d) PSNR surface over joint SNR and CBR variations.}
    \label{fig:SFMA performance}
\end{figure}

Fig.~\ref{fig:SFMA performance} and Fig.~\ref{fig:reconstruction} evaluate the reconstruction performance of the proposed SC-SFMA model against all baseline schemes. Fig.~\ref{fig:SFMA performance}(a) shows the PSNR versus CBR at $\text{SNR}=13$~dB, where SC-SFMA achieves competitive PSNR especially at low CBR values, confirming the effectiveness of rate-adaptive channel coding in bandwidth-constrained regimes. Fig.~\ref{fig:SFMA performance}(b) plots the PSNR versus SNR at $\text{CBR}=0.125$. SC-SFMA achieves approximately $28$--$31$~dB across the tested range of $[0,13]$~dB, maintaining stable performance at low SNR through similarity-conditioned feature fusion. In contrast, BPG+LDPC+QAM exhibits the cliff effect below $1$~dB SNR, while its NOMA variant fails entirely due to compounded channel noise and inter-user interference. SwinJSCC-OMA achieves the highest single-user PSNR ($31$--$37$~dB) by using dedicated orthogonal subchannels, but at the cost of reduced spectral efficiency. The DeepJSCC-NOMA and SwinJSCC+NOMA baselines, which employ physical-layer superposition without semantic feature fusion, achieve lower PSNR than SC-SFMA, demonstrating the advantage of the SCFM module for interference management.
Fig.~\ref{fig:SFMA performance}(c) confirms these trends in the perceptual domain: the MS-SSIM versus SNR curve at $\text{CBR}=0.125$ is consistent with the PSNR results, indicating that SC-SFMA preserves both pixel-level fidelity and structural quality. Fig.~\ref{fig:SFMA performance}(d) presents the PSNR surface over the joint SNR-CBR space. The surface ranges from approximately $28$~dB at low SNR and low CBR to over $31$~dB at high SNR and high CBR, with diminishing returns in CBR at high values. 
Fig.~\ref{fig:reconstruction} illustrates the qualitative reconstruction results of four baselines under identical channel conditions. The figure shows that SC-SFMA achieves the highest MS-SSIM and PSNR scores among all baselines and consistently delivers superior perceptual fidelity.

\begin{figure}
    \centering
    \includegraphics[width=\linewidth]{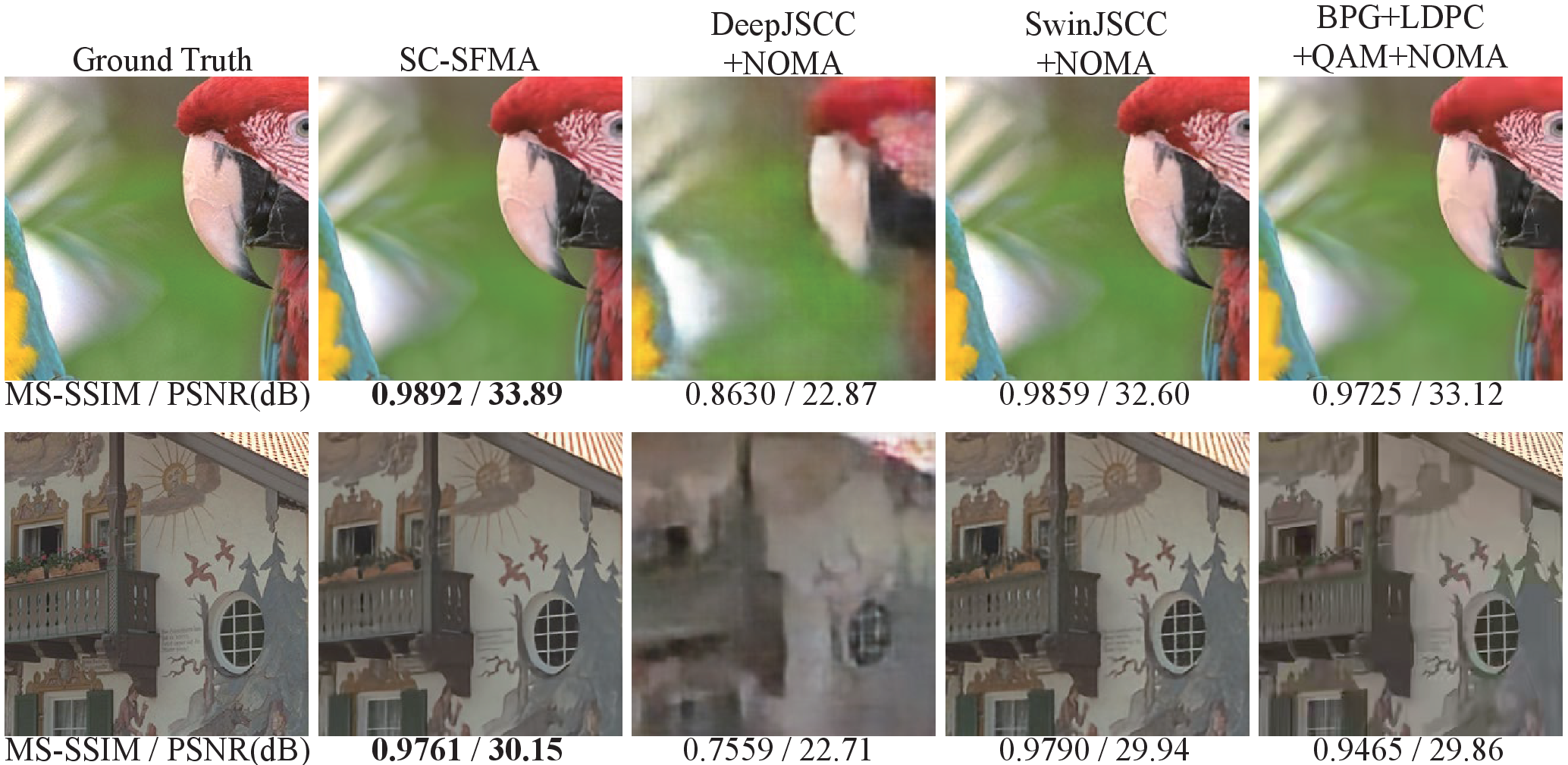}
    \caption{The reconstruction quality when $\text{SNR}=13$~dB and $\text{CBR}=0.125$ under different baselines. Each row corresponds to the reconstructed image of a single user across all baselines, while each column displays the concurrently transmitted images of two users.}
    \label{fig:reconstruction}
\end{figure}

\subsection{Joint Optimisation Performance}
\label{subsec:opt_performance}

\subsubsection{Sum Transmission Rate versus Transmit Power}

Fig.~\ref{fig:fig7_rate_vs_Pmax} plots the sum transmission rate versus $P^{\max}$ with $N = 10$ and $B^{\max} = 10$~MHz. It can be observed that the proposed SC-SFMA consistently outperforms all baseline schemes across the entire power range, achieving $90.3$~Mbps at $P^{\max} = 30$~dBm. This represents gains of $32\%$ over channel-based pairing ($68.3$~Mbps), $39\%$ over SwinJSCC+NOMA ($65.1$~Mbps), $19\%$ over SwinJSCC-FDMA ($76.0$~Mbps), and $12\%$ over equal allocation ($80.7$~Mbps).
The $39\%$ gain over SwinJSCC+NOMA directly quantifies the benefit of the SCFM module, since the two schemes share the same optimisation algorithms and differ only in the profiled $\rho_{ij}$. The proposed scheme also outperforms FDMA, confirming that the twice spectral reuse of SC-SFMA compensates for the residual interference when sufficient power is available. At low power ($P^{\max} = 20$~dBm), however, this margin narrows to $3\%$, because $\rho_{ij}(p_k, \delta_k)$ cannot approach its saturation floor $\rho^{\min}_{ij}$ at reduced per-group power, consistent with the double-saturation behaviour of the logistic model~\eqref{eq:rho_power}.
The performance gap between the proposed scheme and equal allocation ranges from $52\%$ at $20$~dBm to $2\%$ at $34$~dBm. This is because at high power, $\rho_{ij}$ saturates near $\rho^{\min}_{ij}$ for all groups, diminishing the marginal benefit of non-uniform allocation, and joint optimisation is therefore most valuable in the power-limited regime. Channel-based pairing remains below FDMA across the entire range despite using the SC-SFMA transceiver, confirming that the NOMA pairing criterion is counterproductive in SFMA: pairing by channel disparity rather than semantic similarity produces high $\rho_{ij}$ that negates the SCFM gain.

\begin{figure}
    \centering
    \includegraphics[width=0.8\linewidth]{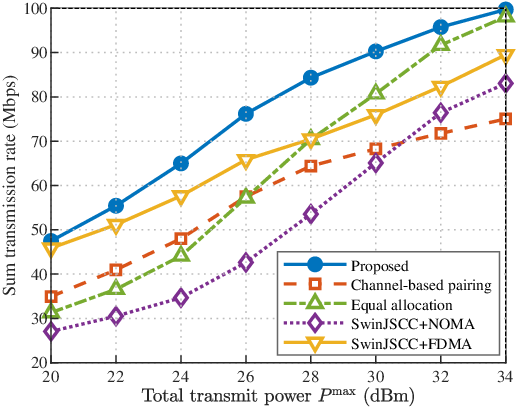}
    \caption{Average total transmission rate versus total transmit power $P^{\max}$ with $N=10$ users and $B^{\max}=10$~MHz.}
    \label{fig:fig7_rate_vs_Pmax}
\end{figure}

\begin{figure}
    \centering
    \includegraphics[width=0.8\linewidth]{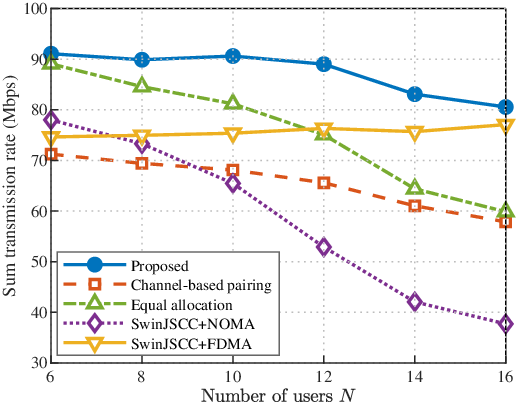}
    \caption{Average total transmission rate versus number of users $N$ with $P^{\max}=30$~dBm and $B^{\max}=10$~MHz.}
    \label{fig:fig8_rate_vs_N}
\end{figure}

\subsubsection{Total Transmission Rate versus Number of Users}

Fig.~\ref{fig:fig8_rate_vs_N} shows the sum transmission rate versus $N$ with $P^{\max} = 30$~dBm and $B^{\max} = 10$~MHz. As $N$ increases, the per-group power and bandwidth decrease as $p_k = 2P^{\max}\!/N$ and $b_k = 2B^{\max}\!/N$. The proposed algorithm exhibits the slowest degradation, declining from $91.1$~Mbps at $N = 6$ to $80.6$~Mbps at $N = 16$ ($-12\%$), whereas SwinJSCC+NOMA drops from $78.0$ to $37.7$~Mbps ($-52\%$). This resilience stems from the growing combinatorial space $\binom{N}{2}$ of candidate pairs, which provides the optimiser with more high-similarity options to compensate for the per-group resource dilution.
A notable feature is the approximately constant FDMA sum rate ($75$--$77$~Mbps): under equal per-user allocation, the SNR $P^{\max}|h_u|^2\!/(B^{\max} N_0)$ 
is independent of $N$, so the sum rate depends only on the average channel quality. The proposed scheme outperforms FDMA by $22\%$ at $N = 6$ and $+5\%$ at $N = 16$, with the crossover occurring around $N \approx 18$. This reflects a fundamental trade-off: SC-SFMA gains twice spectral reuse at the cost of semantic interference, and when per-group resources are too thin to suppress $\rho_{ij}$, the orthogonal strategy prevails. SC-SFMA is therefore most beneficial for moderate user densities.

\subsubsection{Total Transmission Rate versus Bandwidth}

\begin{figure}
    \centering
    \includegraphics[width=0.8\linewidth]{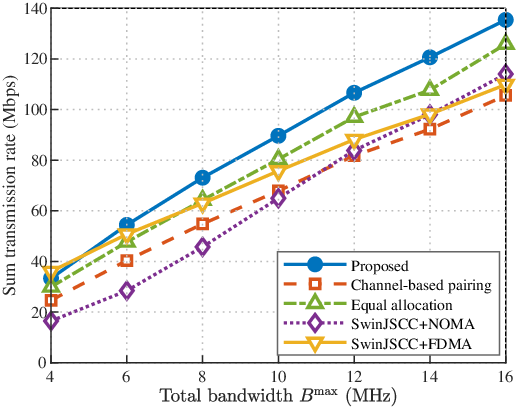}
    \caption{Average total transmission rate versus total bandwidth $B^{\max}$ with $N=10$ users and $P^{\max}=30$~dBm.}
    \label{fig:fig9_rate_vs_Bmax}
\end{figure}

Fig.~\ref{fig:fig9_rate_vs_Bmax} depicts the sum transmission rate versus $B^{\max}$ with $N = 10$ and $P^{\max} = 30$~dBm. All schemes exhibit near-linear growth. The proposed algorithm achieves $33.3$~Mbps at $B^{\max} = 4$~MHz and $135.4$~Mbps at $B^{\max} = 16$~MHz, outperforming FDMA by $23\%$, channel-based pairing by $28\%$, and SwinJSCC+NOMA by $19\%$ at $B^{\max} = 16$~MHz.
At low bandwidth ($B^{\max} = 4$~MHz), FDMA marginally exceeds the proposed scheme ($36.0$ versus $33.3$~Mbps). This is because each group receives only $b_k = 0.8$~MHz, pushing the system into a high-SNR regime where even moderate $\rho_{ij}$ causes significant interference relative to the small noise floor $b_k N_0$. The proposed scheme overtakes FDMA at $B^{\max} \approx 6$~MHz, and the gap widens monotonically thereafter, reaching $+23\%$ at $16$~MHz.
This widening can be attributed to the fact that additional bandwidth increases $b_k N_0$, shifting the operating point toward a noise-limited regime that amplifies the spectral reuse benefit. SwinJSCC+NOMA also surpasses FDMA at $B^{\max} = 16$~MHz ($114.0$ versus $109.9$~Mbps), indicating that the twice bandwidth advantage can compensate for higher interference when sufficient bandwidth dilutes its impact.

\subsubsection{Pairing Strategy Comparison}

\begin{figure}
    \centering
    \includegraphics[width=0.8\linewidth]{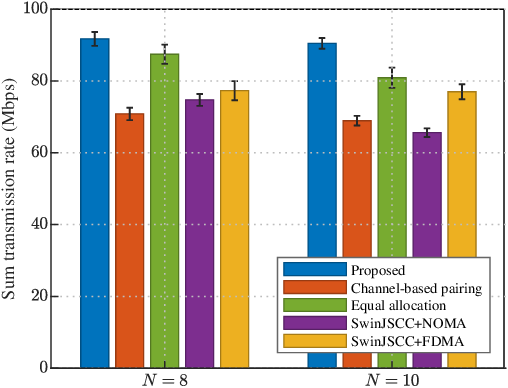}
    \caption{Pairing strategy comparison under different numbers of users with $P^{\max}=30$~dBm and $B^{\max}=10$~MHz.}
    \label{fig:fig10_ablation}
\end{figure}

Fig.~\ref{fig:fig10_ablation} compares all schemes for $N = 8$ and $N = 10$ at $P^{\max} = 30$~dBm and $B^{\max} = 10$~MHz, with error bars indicating one standard deviation. For $N = 10$, the proposed scheme achieves $90.5$~Mbps, outperforming equal allocation ($80.9$~Mbps, $+12\%$), SwinJSCC-FDMA ($77.0$~Mbps, $+18\%$), channel-based pairing ($68.9$~Mbps, $+31\%$), and SwinJSCC+NOMA ($65.6$~Mbps, $+38\%$).
By comparing the proposed scheme with equal allocation, we observe the gain from joint power-bandwidth optimisation: $5\%$ at $N = 8$ and $12\%$ at $N = 10$, consistent with the finding in Fig.~\ref{fig:fig7_rate_vs_Pmax} that resource optimisation is more valuable when per-group resources are scarcer. Channel-based pairing ranks last at $N = 10$ ($68.9$~Mbps), falling below both SwinJSCC+NOMA and FDMA despite using the SC-SFMA transceiver, highlighting that an inappropriate pairing strategy can negate the SCFM gain entirely. The standard deviation of the proposed scheme ($7.6$~Mbps) is approximately half that of equal allocation ($14.4$~Mbps), confirming that joint optimisation also improves robustness to channel and similarity variations.

\section{Conclusion}
\label{sec:conclusion}

In this paper, we have proposed the SC-SFMA framework for joint user pairing and resource allocation in multi-user semantic communication networks. A Swin Transformer-based transceiver with a dual-conditioned similarity modulator has been developed to adaptively fuse cross-user semantic features. The pair-dependent semantic interference has been modelled as a bivariate logistic function of transmit power and compression ratio, bridging the learned transceiver with network-level optimisation. To solve the resulting mixed-integer non-convex problem, a three-block alternating optimisation algorithm has been developed, comprising dual-decomposition-assisted compression ratio allocation, trust-region SCA for power–bandwidth optimisation, and a user pairing optimisation. Numerical results have shown the effectiveness of the proposed algorithm.
Extending the framework to multi-cell scenarios and incorporating generative prior-aided receivers are interesting directions for future work.

\bibliographystyle{IEEEtran}
\bibliography{ref}

\end{document}